\newtheorem{theorem}{Theorem}[section]
\newtheorem{lemma}[theorem]{Lemma}
\newtheorem{remark}[theorem]{Remark}
\newenvironment{proof}[1][Proof.]{\begin{trivlist}
     \item[\hskip \labelsep {\bfseries #1}]}{\end{trivlist}}
\begin{document}

\title{The Dirac equation in the Kerr-de Sitter metric}
\author{D. Batic}
\email{davide.batic@uwimona.edu.jm}
\affiliation{%
Department of Mathematics,\\  University of the West Indies, Kingston 6, Jamaica 
}
\author{K. Morgan}
\email{kirk.morgan02@uwimona.edu.jm }
\affiliation{%
Department of Mathematics,\\  University of the West Indies, Kingston 6, Jamaica 
}
\author{M. Nowakowski}
\email{mnowakos@uniandes.edu.co}
\affiliation{
Departamento de Fisica,\\ Universidad de los Andes, Cra.1E
No.18A-10, Bogota, Colombia
}
\author{S. Bravo Medina}
\email{s.bravo58@uniandes.edu.co }
\affiliation{
Departamento de Fisica,\\ Universidad de los Andes, Cra.1E
No.18A-10, Bogota, Colombia
}

\date{\today}

\begin{abstract}
We consider a Fermion in the presence of a rotating black hole immersed in a universe with positive cosmological constant. After deriving new formulae for the event, Cauchy and cosmological horizons we adopt the Carter tetrad to separate the aforementioned equation into a radial and angular equation. We show how the Chandrasekhar ansatz leads to the construction of a symmetry operator that can be interpreted as the square root of the squared total angular momentum operator. Furthermore, we prove that the the spectrum of the angular operator is discrete and consists of simple eigenvalues and by means of the functional Bethe ansatz method we also derive a set of necessary and sufficient conditions for the angular operator to have polynomial solutions. Finally, we show that there exist no bound states for the Dirac equation in the non-extreme case.
\end{abstract}

\pacs{Valid PACS appear here}
\maketitle
\section{Introduction}
In this paper we study the spectral properties of the angular operator associated to massive Dirac particles outside the event horizon of a non-extreme Kerr-de Sitter (KdS) manifold and we prove the absence of bound states. The KdS metric is a solution of Einstein field equations describing an asymptotically de Sitter space-time containing a rotating black hole \cite{Mat}. Although it is not the most general model of the exterior region of a black hole we can analyze theoretically, since the black hole charge is not taken into account, it represents indeed the most realistic model in astrophysics because in general black holes are embedded in environments filled with gas and plasma and, hence any net charge is neutralized by the ambient matter. Moreover, the Wilkinson Microwave Anisotropy Probe (WMAP) indicated that our universe contains a dark energy component equivalent to a tiny positive cosmological constant \cite{WMAP1,WMAP2,WMAP3,WMAP4}. Hence, it is more than reasonable to study the Dirac equation in the geometry of a non-extreme KdS black hole. A first attempt to analyze the Dirac equation in the aforementioned metric traces back to \cite{Khan}, where the separation of the Dirac equation into an angular and a radial system was achieved by using the Kinnersley tetrad coupled with a Chandrasekhar-like ansatz for the spinor. However, no physical explanation was given concerning the symmetry operator lurking behind this so fruitful ansatz. We fill this gap by showing how the Chandrasekhar ansatz leads to the construction of such an operator that can be interpreted as the square root of the squared total angular momentum operator. Separation of the Dirac equation into ordinary differential equations in general higher dimensional Kerr-NUT-de Sitter space-times and in type D metric has been studied in \cite{Oota} and \cite{Kamran}, respectively. Furthermore, \cite{Bel} proved the essential self-adjointness of the angular operator and the absence of normalizable time-periodic solutions for the Dirac equation in the KdS metric. In that regard we extend the results of \cite{Bel} by proving the self-adjointness of the angular operator. Furthermore, we give a rigorous derivation of the spectrum of the angular operator and also derive a set of necessary and sufficient conditions so that the angular system admits polynomial solutions. Last but not least, we offer a proof of the absence of bound states which is shorter and relies on a different method than that adopted by \cite{Bel}.\\
The paper is organized as follows. In Section $1$ we give a short introduction motivating the importance of our findings. In Section $2$ we complement the results given in \cite{Mat} by giving a thorough classification of the roots of the quartic polynomial equation controlling the location of the horizons paying special attention to their algebraic multiplicities. In the non-extreme case of a KdS black hole we construct expansions for the positions of the horizons with respect to the small parameter $\Lambda$ representing the cosmological constant. We also show that there are two different types of extreme KdS black holes and for each of them we obtain analytical formulae for the horizons. We conclude this section by analyzing the naked case which is characterized by fixed values of the rotation parameter and the mass of the black hole. Also in this scenario we are able to give the exact position of the cosmological horizon. In Section $3$ we employ Carter tetrad and a Chandrasekhar-like ansatz to separate the Dirac equation into an angular and radial system. The choice of the Carter tetrad is motivated by the fact it allows to give a more elegant treatment of the separation problem and at the same time it leads to a simpler form for the radial and angular systems than those derived by \cite{Khan} who instead used the Kinnersley tetrad. We also construct a symmetry operator of the Dirac equation in the KdS metric generalizing the one obtained in \cite{Davide1} and show that it can be interpreted as the square root of the squared total angular momentum operator. In Section $4$ we study the angular eigenvalue problem. More precisely, we prove the self-adjointness of the angular operator and we show that its spectrum is purely discrete and simple by using an off-diagonalization method. Furthermore, we derive a set of necessary and sufficient conditions for the existence of polynomial solutions of the angular system. Finally, in Section $5$ we show that the Dirac equation in the KdS metric does not admit bound states solutions.

\section{The Kerr-de Sitter metric} \label{section1}
The Kerr-de Sitter (KdS) metric represents a rotating black hole immersed in an asymptotically de Sitter space-time with positive cosmological constant $\Lambda$. In Boyer-Lindquist coordinates $(t,r,\vartheta,\varphi)$ with $r>0$, $0\leq\vartheta\leq\pi$, $0\leq\varphi<2\pi$ this metric is given 
by \cite{Cart1,Cart,Khan}
\begin{equation}\label{element}
ds^{2}=\frac{\Delta_r}{\Xi^2\Sigma}(dt-a\sin^{2}\vartheta d\varphi)^{2}-\Sigma\left(\frac{dr^2}{\Delta_r}+\frac{d\vartheta^2}{\Delta_\vartheta}\right)-\frac{\Delta_\vartheta\sin^{2}\vartheta}{\Xi^2\Sigma}[adt-(r^2+a^2)d\varphi]^2
\end{equation}
with
\begin{equation*}
\Delta_r=(r^2+a^2)\left(1-\frac{\Lambda}{3}r^2\right)-2Mr,\quad 
\Delta_\vartheta=1+\frac{\Lambda a^2}{3}\cos^2{\vartheta},\quad
\Xi=1+\frac{\Lambda a^2}{3},\quad
\Sigma=r^2+a^2\cos^{2}{\vartheta}, 
\end{equation*}
where $M$ and $a$ are the mass and the angular momentum per unit mass of the black hole, respectively. By $\mathcal{F}_{KdS}(\Lambda,M,a)$ we denote the family of Kerr-de Sitter space-times. Observe that if $\Lambda=0$ the above metric reduces to the usual Kerr metric. For $\Lambda\neq 0$ and $a=0$ we obtain the line element of a Schwarzschild-de Sitter black hole. Moreover, if $\Lambda\neq 0$ and $M=0=a$ the manifold becomes a de Sitter universe with cosmological horizon at $r_c=\sqrt{3/\Lambda}$. Finally, if $M=0$ and 
$a\neq 0\neq\Lambda$ it has been shown by \cite{Mat} that the coordinate transformation 
\[
T=t/\Xi,\quad 
\overline{\varphi}=\varphi-\frac{a\Lambda}{3\Xi}t,\quad
r\cos{\vartheta}=y\cos{\Theta},\quad
y^2=\frac{r^2\Delta_\vartheta+a^2\sin^2{\vartheta}}{\Xi}
\]
maps the family of space-times $\mathcal{F}_{KdS}(\Lambda,0,a)$ to a family of de Sitter universes. Since $\Lambda>0$ it follows that $\Delta_\vartheta$ is always positive and the position of the horizons is determined by the roots of the quartic equation $\Delta_r=0$. We will use the complete root classification method developed by \cite{Arnon,Yang} in order to study the zeros of this equation. To this purpose we rewrite it as
\begin{equation}\label{**}
r^4+pr^2+qr+u=0
\end{equation}
with
\begin{equation}\label{pqu}
p=-\frac{3}{\Lambda}\left(1-\frac{\Lambda a^2}{3}\right),\quad
q=\frac{6M}{\Lambda},\quad
u=-\frac{3a^2}{\Lambda}.
\end{equation}
According to \cite{Pod} equation (\ref{**}) will have four, two, or no real roots. However, a more subtle root classification than that offered by \cite{Pod} will arise because of the different algebraic multiplicities of these roots. Furthermore, we recall that in 1998 \cite{Riess} and \cite{Perlmutter} used Type $1$a supernovae to show that the universe is accelerating, thus providing the first direct evidence that $\Lambda$ is non-zero, with $\Lambda\approx 1.7\times 10^{-121}$ Planck units. 
We will further consider $\Lambda$ as fixed. The analysis of the roots of equation (\ref{**}) is greatly simplified if we rescale the radial variable according to $\rho=r/r_{c,dS}$ where $r_{c,dS}=\sqrt{3/\Lambda}$ is the cosmological horizon of the corresponding de Sitter universe belonging to the family of space-times $\mathcal{F}_{KdS}(\Lambda,0,0)$. Then, our equation (\ref{**}) becomes
\begin{equation}\label{***}
\rho^4+\widetilde{p}\rho^2+\widetilde{q}\rho+\widetilde{u}=0
\end{equation}
with
\[
\widetilde{p}=-(1-\alpha^2),\quad
\widetilde{q}=\mu,\quad
\widetilde{u}=-\alpha^2
\]
where $\alpha=a/r_{c,dS}$ and $\mu=2M/r_{c,dS}$. Taking into account that the Schwarzschild radius $r_s=2M$ is always smaller than the de Sitter cosmological horizon we find that the parameter $\mu$ can vary only on the interval $[0,1)$. As in \cite{Arnon,Yang} we introduce the auxiliary polynomials
\begin{eqnarray*}
\delta(\widetilde{p},\widetilde{q},\widetilde{u})&=&256\widetilde{u}^3-128\widetilde{p}^2\widetilde{u}^2+144\widetilde{p}\widetilde{q}^2\widetilde{u}+16\widetilde{p}^4\widetilde{u}-27\widetilde{q}^4-4\widetilde{p}^3\widetilde{q}^2,\\
L(\widetilde{p},\widetilde{q},\widetilde{u})&=&8\widetilde{p}\widetilde{u}-9\widetilde{q}^2-2\widetilde{p}^3.
\end{eqnarray*}
Then, we have the following root classification for the quartic equation (\ref{***}).
\begin{itemize}
\item
There will be four real distinct roots if $\delta>0$, and $L>0$. These two conditions give rise to the following system of inequalities
\begin{eqnarray}
&&-\frac{27}{16}\mu^4+(1+33\alpha^2-33\alpha^4-\alpha^6)\frac{\mu^2}{4}-\alpha^2-4\alpha^4-6\alpha^6-4\alpha^8-\alpha^{10}>0\label{ineq1},\\
&&-9\mu^2+2+2\alpha^2-2\alpha^4-2\alpha^6>0.\label{ineq2}
\end{eqnarray} 
The first inequality in (\ref{ineq1}) can be cast into the form
\begin{equation}\label{LU}
0<fm^{2}+g(\alpha) m+h(\alpha)=: P_{2}(m)
\end{equation}
with $f= -9$, $g(\alpha)=(1-\alpha^2)(\alpha^4+34\alpha^2+1)$, and $h(\alpha)=-3\alpha^{2}(1+\alpha^2)^4$. Note that (\ref{LU}) is a polynomial inequality of degree two in the parameter $m:= M^2 \Lambda$. The inequality \ref{ineq2} takes the form
\begin{equation} \label{DU}
6m < 1+\alpha^{2}-\alpha^{4}-\alpha^{6}=: P_{3}(\alpha)
\end{equation}
A straightforward analysis of these inequalities is presented in Figure~\ref{giggig} which demonstrates 
the allowed region in the variables $\alpha$ and $m$ according to \eqref{LU} and \eqref{DU}. It is obvious that there exist 
extremal values for $\alpha$ and $m$ (which we will discuss below).  
\begin{figure}\label{GG}
\includegraphics[scale=0.4]{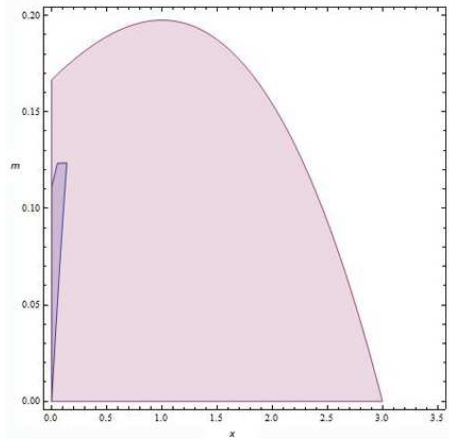}
\caption{\label{giggig}
Allowed values for the parameters $m$ and $x=2\alpha^2$. The brighter shaded region
corresponds to the inequality $L > 0$ whereas the darker shaded region to
$\delta > 0$. This demonstrates that $\delta >0$ is the more stringent inequality.
}
\end{figure}
In the same figure we see that the allowed region of the inequality $\delta>0$ lies inside the allowed region of the inequality $L>0$. 
In principle, it would suffice to consider only \eqref{LU}. It is also possible to obtain some analytical results. First of all, we observe that $P_{2}(m)$ is a concave down parabola with respect to the parameter $m$. The solution set of the inequality $P_{2}(m)>0$ will be non empty if the polynomial $P_2$ has two zeros. The condition for that reads
\begin{equation}\label{ConditionL}
g^{2}+4|f|h=(\alpha^2-4\alpha+1)^3(\alpha^2+4\alpha+1)^3> 0
\end{equation}
which will be satisfied for $\alpha<-2-\sqrt{3}$ or $-2+\sqrt{3}<\alpha<2-\sqrt{3}$ or $\alpha>2+\sqrt{3}$ and the zeroes are given by
\begin{equation}
m_{\pm}(\alpha)=\frac{-g(\alpha)\pm\sqrt{g(\alpha)^2+4|f|h(\alpha)}}{2f}=-\frac{-(1-\alpha^2)(\alpha^4+34\alpha^2+1)\pm\sqrt{(\alpha^2-4\alpha+1)^3(\alpha^2+4\alpha+1)^3}}{18}.
\end{equation}
\begin{figure}\label{FF}
\includegraphics[scale=0.4]{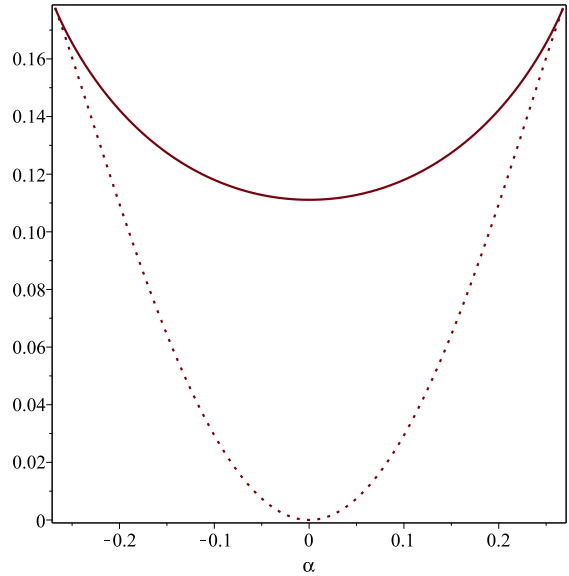}
\caption{\label{figfig}
Plot of the roots of the polynomial $P_2$ on the interval $\alpha\in(-2+\sqrt{3},2-\sqrt{3})$. The solid line corresponds to $m_-$ while the dotted line represents $m_{+}$.
}
\end{figure}
It is not difficult to verify that $m_{+}(\alpha)$ is negative if $\alpha<-2-\sqrt{3}$ or $\alpha>2+\sqrt{3}$ whereas it is positive whenever $-2+\sqrt{3}<\alpha<2-\sqrt{3}$. Since $m_{+}(\alpha)$ is even we have that on the latter interval $0<m_{+}(\alpha)<416/\sqrt{3}-240=0.177709$. 
Clearly, $m_{-}(\alpha)$ is also even and positive on the interval $-2+\sqrt{3}<\alpha<2-\sqrt{3}$ where $1/9<m_{-}(\alpha)<416/\sqrt{3}-240$.
From Fig.~\ref{figfig} we see that $m_{+}(\alpha) < m < m_{-}(\alpha)$. The maximal value of $m$ is $m_{max}=416/\sqrt{3}-240=0.177709$ and is attained at $\alpha=-2+\sqrt{3}$ and $\alpha=2-\sqrt{3}$. Furthermore, the minimal value of $m$ is $m_{min}=1/9=0.111111$ and is realized for $\alpha=0$. To summarize, from the inequalities \eqref{LU} and \eqref{ConditionL} we derive absolute limits on $a$ and $M$, more precisely
\begin{equation} \label{limitabs1}
a_{min}=-a_{max}<a<a_{max}=\frac{2\sqrt{3}-3}{\sqrt{\Lambda}}=\frac{0.464101}{\sqrt{\Lambda}}
\end{equation}
and
\begin{equation} \label{limitabs2}
M_{min}=\frac{1}{3\sqrt{\Lambda}}<M< \frac{1}{\sqrt{\Lambda}}\sqrt{\frac{416}{\sqrt{3}}-240}.
\end{equation}
Of course, given an $a$ or $\alpha$ the mass can range between $m_{max}$ and $m_{min}$ and viceversa, for a given $M$ or $m$ we get a range for $a$ or $\alpha$. Descartes' rule of signs predicts that (\ref{**}) will have one negative root at $r_{--}$ without physical meaning and three positive roots at $r_{-}$, $r_{+}$, and $r_c$ representing the Cauchy, the event, and the cosmological horizon, respectively. Note that $r_{--}<0<r_{-}<r_{+}<r_{c}$. Moreover, we have the factorization $\Delta_r=-(r-r_{--})(r-r_{-})(r-r_{+})(r-r_c)$ and $\Delta_r>0$ on the intervals $0<r<r_{-}$ and $r_{+}<r<r_c$.  Note that the presence of $r_{--}$ can always be removed since we can express this root as $r_{--}=-(r_{-}+r_{+}+r_c)$ by using one of Vieta's formulae. For completeness we give also analytical expressions for the roots of (\ref{**}) which are represented by 
\begin{eqnarray}
r_{--}&=&-\frac{\sqrt{-z_1}+\sqrt{-z_2}+\sqrt{-z_3}}{2},\quad r_{-}=\frac{\sqrt{-z_1}+\sqrt{-z_2}-\sqrt{-z_3}}{2},\label{r12}\\
r_{+}&=&\frac{\sqrt{-z_1}-\sqrt{-z_2}+\sqrt{-z_3}}{2},\quad r_{c}=\frac{-\sqrt{-z_1}+\sqrt{-z_2}+\sqrt{-z_3}}{2},\label{r34}
\end{eqnarray}
where $z_i=w_i+(2p)/3$ for every $i=1,2,3$ and 
\[
w_1=\xi_1+\xi_2,\quad w_2=\omega^2\xi_1+\omega\xi_2,\quad w_3=\omega\xi_1+\omega^2\xi_2
\]
with $\omega=\mbox{exp}(2\pi i/3)$ and 
\[
\xi_1=\sqrt[3]{-\frac{\widehat{q}}{2}+\sqrt{\left(\frac{\widehat{p}}{3}\right)^3-\left(\frac{\widehat{q}}{2}\right)^2}},\quad
\xi_2=\sqrt[3]{-\frac{\widehat{q}}{2}-\sqrt{\left(\frac{\widehat{p}}{3}\right)^3-\left(\frac{\widehat{q}}{2}\right)^2}},\quad
\widehat{p}=-\left(\frac{p^2}{3}+4u\right),\quad
\widehat{q}=\frac{2}{27}p^3-\frac{8}{3}pu+q^2.
\]
The details of the derivation of the above formulae can be found in the appendix. Note that the formulae for the roots are too unwieldy for general use. However, since $\Lambda$ can be interpreted as a small parameter, we can use perturbation theory of algebraic equations to derive simpler formulae for the Cauchy, event, and cosmological horizons. To this purpose, let us consider the polynomial
\[
P_\Lambda(r)=-\Delta_r=\frac{\Lambda}{3}r^4-\left(1-\frac{\Lambda a^2}{3}\right)r^2+2Mr-a^2.
\]
Then, we immediately see that it can be written as an unperturbed polynomial plus a perturbation containing $\Lambda$, i.e. 
$P_\Lambda(r)=T_0(r)+E_\Lambda(r)$ with $T_0(r)=-r^2+2Mr-a^2$, and $E_\Lambda(r)=\Lambda r^2(r^2+a^2)/3$. Clearly, the roots of $T_0(r)=0$ are given by $r_{\pm,K}=M\pm\sqrt{M^2-a^2}$ where $r_{+,K}$ and $r_{-,K}$ coincide with the event and Cauchy horizon of a Kerr 
black hole, respectively. Since $E_\Lambda(r)$ vanishes as $\Lambda$ tends to zero, a lemma in perturbation theory (see page $83$ in \cite{Simm}) ensures that the equation $P_\Lambda(r)=0$ has at least two roots $r_{-}(\Lambda)$ and $r_{+}(\Lambda)$ such that 
$r_{\pm}(\Lambda)\to r_{\pm,K}$ as $\Lambda\to 0$. Substituting $r_{\pm,\Lambda}=r_{\pm,K}+r_{1,\pm}\Lambda+\mathcal{O}(\Lambda^2)$ into the equation $P_\Lambda(r)=0$ yields
\[
T_0(r_{\pm,K})+\frac{1}{3}\left[r_{\pm,K}^2(r_{\pm,K}^2+a^2)-6(r_{\pm,K}-M)r_{1,\pm}\right]\Lambda+\mathcal{O}(\Lambda^2)=0.
\]
From the definition of $r_{\pm,K}$ it follows immediately that $T_0(r_{\pm,K})=0$. Moreover, the fundamental theorem of perturbation theory implies that the coefficients of the various powers of $\Lambda$ can be set to zero. Hence, we find that the coefficients $r_{1,\pm}$ are given by
\[
r_{1,\pm}=\frac{r_{\pm,K}^2(r_{\pm,K}^2+a^2)}{6(r_{\pm,K}-M)}=\pm\frac{M r_{\pm,K}^3}{3\sqrt{M^2-a^2}}.
\]
Note that for $\Lambda$ approaching zero, inequality (\ref{LU}) reduces to the condition of a non extreme Kerr black hole, that is 
$M^2>a^2$, and therefore the coefficients $r_{1,\pm}$ are always well defined. Hence, the event and Cauchy horizons of a non extreme Kerr-de Sitter black hole can be represented as follows
\[
r_{\pm}=r_{\pm,K}\pm\frac{M r_{\pm,K}^3}{3\sqrt{M^2-a^2}}\Lambda+\mathcal{O}(\Lambda^2).
\] 
Concerning the cosmological horizon we rescale the radial variable as $r=w/\sqrt{\Lambda}$ and we consider the equation
\[
P_\Lambda(\Lambda^{-1/2}w)=\frac{\Lambda^{-1}}{3}w^4+\frac{a^2}{3}w^2-\Lambda^{-1}w^2+2M\Lambda^{-1/2}w-a^2=0.
\]
Multiplying throughout by $\Lambda$ and setting $\beta=\Lambda^{1/2}$ we obtain the regular perturbation problem
\begin{equation}\label{i1}
w^4+a^2\beta^2 w^2-3w^2+6M\beta w-3a^2\beta^2=0.
\end{equation}
As $\beta\to 0$ we get the unperturbed equation $w^4-3w^2=0$ having roots at $0$ and $\pm\sqrt{3}$. Taking into account that for vanishing $a$ and $M$ the Kerr-de Sitter metric represents a de Sitter universe with cosmological horizon at $r_{c,dS}=\sqrt{3/\Lambda}$, we will look at an expansion of the form $w=\sqrt{3}+w_1\beta+w_2\beta^2+w_3\beta^3+\mathcal{O}(\beta^4)$. Substituting such an expansion into (\ref{i1}) and setting the coefficients of the various powers of $\beta$ to zero, we obtain $w_1=-M$, $w_2=-\sqrt{3}M^2/2$, and $w_3=-M(M^2-a^2)/3$. Finally, we conclude that the cosmological horizon for a non extreme Kerr-de Sitter black hole admits the following expansion in $\Lambda$, namely
\[
r_c=r_{c,dS}-M\sqrt{\Lambda}-\frac{\sqrt{3}M^2}{2} \Lambda-\frac{M}{3}(M^2-a^2)\Lambda^{3/2}+\mathcal{O}(\Lambda^2).
\]
In the case $a\to 0$ the above expression gives the position of the cosmological horizon for a Schwarzschild-de Sitter black hole.
\item
There will be two distinct and two coinciding roots whenever $L>0$, $\delta=0$ and $-\alpha_{max}<\alpha <\alpha_{max}$. 
The value of $m$ can be only on one of the two branches $m_{\pm}(\alpha)$. Since the condition $\delta=0$ reduces for $\Lambda\to 0$ to the constraint $M^2=a^2$, we will call this case the extreme Kerr-de Sitter black hole. Let $\alpha\in(-2+\sqrt{3},2-\sqrt{3})$. Since (\ref{ineq1}) is biquadratic in $\mu$, the corresponding roots are
\[
\mu_{1,\pm}(\alpha)=\pm\frac{1}{9}\sqrt{F(\alpha)+6\sqrt{G(\alpha)}},\quad
\mu_{2,\pm}(\alpha)=\pm\frac{1}{9}\sqrt{F(\alpha)-6\sqrt{G(\alpha)}}
\]
with $F(\alpha)=6g(\alpha)$, $G(\alpha)=(\alpha^2-\alpha_{-}^2)^3(\alpha^2-\alpha_+^2)^3$, and $\alpha_{\pm}=2\pm\sqrt{3}$. We have the following two cases
\begin{itemize}
\item
if $\mu=\mu_{1,+}(\alpha)$, the event and cosmological horizons coincide, i.e. $\rho_{+}=\rho_c$, and
\[
\rho_c=\frac{1}{6}\sqrt{6-6\alpha^2+6\sqrt{\alpha^4-14\alpha^2+1}},\quad
\rho_{-}=-\rho_c+\sqrt{1-\alpha^2-2\rho_c^2}.
\]
This implies that
\[
r_c=\frac{1}{6}\sqrt{\frac{3}{\Lambda}\left[6-2a^2\Lambda+2\sqrt{a^4\Lambda^2-42a^2\Lambda+9}\right]},\quad
r_{-}=-r_c+\sqrt{\frac{3}{\Lambda}-a^2-2r_c^2}
\]
with expansions in the small parameter $\Lambda$ given by
\[
r_c=\frac{1}{\sqrt{\Lambda}}-\frac{2}{3}a^2\sqrt{\Lambda}-\frac{8}{9}a^4\Lambda^{3/2}+\mathcal{O}(\Lambda^{5/2}),\quad
r_{-}=\frac{3}{2}a^2\sqrt{\Lambda}+\frac{15}{8}a^4\Lambda^{3/2}+\mathcal{O}(\Lambda^{5/2}).
\]
\item
If $\mu=\mu_{2,+}(\alpha)$, the Cauchy and event horizons coincide, i.e. $\rho_{-}=\rho_{+}$, and  
\[
\rho_{+}=\frac{1}{6}\sqrt{6-6\alpha^2-6\sqrt{\alpha^4-14\alpha^2+1}},\quad
\rho_{c}=-\rho_{+}+\sqrt{1-\alpha^2-2\rho_{+}^2},
\]
that is
\[
r_{+}=\frac{1}{6}\sqrt{\frac{3}{\Lambda}\left[6-2a^2\Lambda-2\sqrt{a^4\Lambda^2-42a^2\Lambda+9}\right]},\quad
r_{c}=-r_{+}+\sqrt{\frac{3}{\Lambda}-a^2-2r_{+}^2}
\]
with expansions in the small parameter $\Lambda$ given by
\[
r_{+}=a+\frac{2}{3}a^3\Lambda+\mathcal{O}(\Lambda^{2}),\quad
r_{c}=\sqrt{\frac{3}{\Lambda}}-a-\frac{\sqrt{3}}{2}a^2\sqrt{\Lambda}-\frac{2}{3}a^3\Lambda-\frac{41\sqrt{3}}{72}a^4\Lambda^{3/2}+\mathcal{O}(\Lambda^{2}).
\]
\end{itemize}
For the details of the derivation we refer to the appendix. From the results above we see that the extreme Kerr-de Sitter black hole in reality contains two possible scenarios. If $\mu=\mu_{1,+}$ we have $\Delta_r<0$ in the region between the Cauchy and the cosmological horizons, thus signalizing the presence of a dynamical universe where the central singularity is shielded by a Cauchy horizon. A similar scenario emerged also from the analysis of \cite{Brill} where the horizons of Reissner-Nordstr\"{o}m-de Sitter black holes have been investigated. In the limit $a\to 0$ we have $M=1/(3\sqrt{\Lambda})$ and this case corresponds to an extreme Schwarzschild-de Sitter black hole where the event and cosmological horizons coincide. In this case we have a dynamical universe exhibiting a naked central singularity. Moreover, if $\mu=\mu_{2,+}$, we have instead $\Delta_r>0$ in the region between the event and cosmological horizons. This case reduces to the usual extreme Kerr case in the limit $\Lambda\to 0$ as we can see from the above expansion for $r_{+}$. Last but not least, it is pretty amazing that we can find relatively simple formulae for the positions of the horizons in the case of an extreme Kerr-de Sitter black hole.
\item
There will be two real roots with algebraic multiplicities one and three, respectively, whenever \cite{Arnon,Yang}
\begin{equation}\label{gg}
\delta=0,\quad L=0,\quad \widetilde{p}<0,\quad \widetilde{q}\neq 0. 
\end{equation}
For $L=0$ and $\delta=0$ we have $\alpha=\alpha_{max}$ and $m=m_{max}$. The condition $\widetilde{q}\neq 0$ translates simply into the requirement $M\neq 0$ whereas the condition $\widetilde{p}<0$ is satisfied for $\Lambda a^2/3<1$. Moreover, we see that $\delta=0$ and $L=0$ at the points where the graph of $\mu_+$ intersects the graphs of $\mu_{1,+}$ and $\mu_{2,+}$. This happens for $\alpha=\pm\alpha_{-}$ with $\alpha_{-}=2-\sqrt{3}$ for which
\[
\mu_{\pm}(\pm\alpha_{-})=\mu_{1,+}(\pm\alpha_{-})=\mu_{2,+}(\pm\alpha_{-})=\frac{8}{3}\sqrt{26\sqrt{3}-45}.
\]
or equivalently, in terms of the parameters $a$ and $M$
\[
a=\pm\frac{2\sqrt{3}-3}{\sqrt{\Lambda}},\quad M=\frac{4}{3}\sqrt{\frac{78\sqrt{3}-135}{\Lambda}}.
\]
Furthermore, the roots of (\ref{***}) are $\rho_{--}=-\sqrt{6\sqrt{3}-9}$ and $\rho_{-}=\rho_{+}=\rho_c=\sqrt{(2/3)\sqrt{3}-1}$ to which correspond the following roots of (\ref{**}), namely
\[
r_{--}=-\sqrt{\frac{18\sqrt{3}-27}{\Lambda}},\quad r_{-}=r_{+}=r_{c}=\sqrt{\frac{2\sqrt{3}-3}{\Lambda}}.
\]
Taking into account that $r>0$, it is straightforward to verify that $\Delta_r>0$ for $0<r<r_c$. Hence, this scenario corresponds to a naked singularity at $r=0$ in the equatorial plane \cite{Pod} immersed in an universe with cosmological horizon. It is interesting to observe that this case does not admit a Kerr-like counterpart in the limit $\Lambda\to 0$ because in the limit of a vanishing constant the condition $L=0$ gives rise to the contradiction $1=0$.
\item
The case of two real roots both having algebraic multiplicity two which corresponds to the set of conditions $\delta=L=\widetilde{q}=0$, and $\widetilde{p}<0$ does not represent a black hole since the condition $\widetilde{q}=0$ requires that $M=0$ and therefore it will not be further analyzed. The same conclusion holds for the case $\delta<0$ corresponding to the scenario of two complex conjugate roots and two real roots each having algebraic multiplicity one.
\item
If $\delta=L=\widetilde{p}=0$ there is only one root with algebraic multiplicity four. The condition $\widetilde{p}=0$ implies that (\ref{**}) becomes $r^4+qr+u=0$ with $q=6M/\Lambda$ and $u=-9/\Lambda^2$. On the other side (\ref{**}) could be written as $(r-r_0)^4=0$. A quick comparison with $r^4+qr+u=0$ shows that $u=r_0^4=0$ which is impossible since $u=-9/\Lambda^2\neq 0$. Hence, this case can never occur.
\item
If $\delta=0$ and $L<0$ there is only one root with algebraic multiplicity two and two complex conjugate roots. This means that (\ref{**}) must be of the form $(r-r_0)^2(r^2+A^2)=0$ with $A^2>0$. Comparing the latter with (\ref{**}) yields $r_0=0$, $A^2=p$, $q=0$, and $u=0$. The last two conditions imply $M=a=0$. Hence, $p=-3/\Lambda$ and since we are interested in the case of a positive cosmological constant, this would require that $A^2<0$. Therefore, this case might be relevant in the case of an anti-de Sitter space and it will not be further investigated here.
\item
There will be no real roots when [$\delta>0$ and ($L\leq 0$ or $\widetilde{p}>0$)] or [$\delta=0$, $L= 0$ and $\widetilde{p}>0$]. The absence of real roots implies that there will be no cosmological horizon and therefore no de Sitter space asymptotically at infinity. However, this will never happen. First, $\delta \ge 0$ and $\widetilde{p} > 0$ are not compatible. From the first we obtain $\alpha^2 \le 0.0718$ (see the analysis above) whereas
the second inequality is equivalent to $\alpha^2> 1$. It remains to look into the $\delta > 0$ and $L \le 0$ case. Numerically the second inequality is satisfied for $\alpha^2 > 41/3$ which again contradicts $\delta > 0$.

\end{itemize}

Finally, after the analysis of the horizons some general comments are in order. 
Let us note that it is not always obvious that choosing one parameter in the theory constraints
the choice of other parameters. In the Kerr-de Sitter black hole this interplay happens for $a$ and $M$. 
Moreover, equations (\ref{limitabs1}) and (\ref{limitabs2}) sets absolute limits on $a$ and $M$. This seems to be a general characteristic of 
theories containing the cosmological constant $\Lambda$. Indeed, several cases can be quoted
\begin{itemize}
\item[1.]
The general condition for the existence of horizons in the Schwarzschild-de Sitter metric is \cite{me1}
\begin{equation}
\frac{1}{3\sqrt{\Lambda}} \ge M 
\end{equation}
\item[2.]
The condition for the Newtonian limit to exist in the Schwarzschild-de Sitter case is \cite{me2}
\begin{equation}
\frac{2\sqrt{2}}{3\sqrt{\Lambda}} \gg M 
\end{equation}
\item[3.]
The effective potential which determines the orbits in the Schwarzschild-de Sitter case has three
local extrema. To avoid that the first maximum coincides with the minimum one has to respect 
that the angular momentum per mass $l$ is smaller than $\sqrt{3}r_s$ with $r_s$ being the Schwarzschild radius.
On the other hand insisting that the minimum and the second maximum do not degenerate we have \cite{me3}
\begin{equation}
l < \left(\frac{3}{4}\right)^{1/3}(r_s^2r_{\Lambda})^{1/3},  \,\, r_{\Lambda}\equiv \frac{1}{\sqrt{\Lambda}}
\end{equation}
There exists a maximum radius of the order $(r_s^2r_{\Lambda})^{1/3}$ (the location of the second maximum) beyond which bound states are not possible.
\item[4.]
In the Schwarzschild-de Sitter metric different concepts of equilibrium demand a constraint on mass or density.
The hydrostatic equilibrium demands that \cite{me4}
\begin{equation}
\frac{2}{9\sqrt{\Lambda}} \ge M 
\end{equation}
The Tolman-Oppenheimer-Volkoff equation states that \cite{me4}
\begin{equation}
\Lambda < 4\pi \bar{\rho},  
\end{equation}
with $\bar{\rho}$ the average density. On the other hand the virial equilibrium applies if \cite{me5}
\begin{equation}
\rho > A\rho_{vac}, \,\, \rho_{vac}=\frac{\Lambda}{8\pi}
\end{equation}
with $\rho$ being the density and $A$ an expression which depends on
the geometry of the object.
\item[5.]
Probing the black hole evaporation via a generalized uncertainty principle one finds
that there exists a maximum temperature corresponding to a minimum black hole mass (black hole
remnant) \cite{me1} and, in case that the cosmological constant is non-zero, that we have a minimum
temperature corresponding to a maximum mass, i.e.,
\begin{eqnarray}
T_{max} &\sim& m_{pl} \leftrightarrow M_{min} \sim m_{pl} \nonumber \\
T_{min} &\sim& \sqrt{\Lambda} \leftrightarrow M_{max} \sim \frac{m_{pl}^2}{\sqrt{\Lambda}}  
\end{eqnarray}
where we have restored the Planck mass $m_{pl}$.
\end{itemize}

\section{The Dirac equation in the Kerr-de Sitter metric}
A fermion of mass $m_e$ and charge $e$ is described by the Dirac equation ($\hbar=c=G=1$) \cite{Page}
\begin{equation}\label{unopage}
\nabla^{A}_{A^{'}}\phi_{A}=\frac{m_e}{\sqrt{2}}~\chi_{A^{'}},\quad
\nabla_{A}^{A^{'}}\chi_{A^{'}}=\frac{m_e}{\sqrt{2}}~\phi_{A},  
\end{equation}
where $\nabla_{AA^{'}}$ denotes covariant differentiation, and $(\phi^A,\chi^{A^{'}})$ are the two-component spinors representing the wave function. According to \cite{New}, at each point of the space-time we can associate to the spinor basis $\xi_{a}^{A}$ a null tetrad $(\mathbf{l},\mathbf{n},\mathbf{m},\overline{\mathbf{m}})$ obeying the normalization and orthogonality relations 
\begin{equation}\label{tetrad_relations}
\mathbf{l}\cdot\mathbf{n}=1,\quad 
\mathbf{m}\cdot\overline{\mathbf{m}}=-1,\quad
\mathbf{l}\cdot\mathbf{m}=\mathbf{l}\cdot\overline{\mathbf{m}}=\mathbf{n}\cdot\mathbf{m}=\mathbf{n}\cdot\overline{\mathbf{m}}=0.
\end{equation}
Moreover, to any tetrad we can associate a unitary spin-frame $(o^A,\iota^A)$ defined uniquely up to an overall sign factor by the relations $o^A\overline{o}^{A^{'}}=l^a$, $\iota^A\overline{\iota}^{A^{'}}=n^a$, $o^A\overline{\iota}^{A^{'}}=m^a$, $\iota^A\overline{o}^{A^{'}}=\overline{m}^a$, and $o^A \iota^A=1$ \cite{pen}. As in \cite{Davide1} we denote by $\phi_0$ and $\phi_1$ the components of $\phi^A$ in the spin-frame $(o^A,\iota^A)$, and by $\chi_{0^{'}}$ and $\chi_{1^{'}}$ the components of $\chi^{A^{'}}$ in $(\overline{o}^A,\overline{\iota}^A)$, more precisely $\phi_0=\phi_A o^A$, $\phi_1=\phi_A\iota^A$, $\chi_{0^{'}}=\chi_{A^{'}}\overline{o}^{A^{'}}$, and $\chi_{1^{'}}=\chi_{A^{'}}\overline{\iota}^{A^{'}}$, and we introduce the spinor $\psi=(F_1,F_2,G_1,G_2)^T$ with components $F_1=-\phi_0$, $F_2=\phi_1$, $G_1=i\chi_{1^{'}}$, and $G_2=i\chi_{0^{'}}$. Then, the two equations in (\ref{unopage}) can be written in matrix form as
\begin{equation} \label{uno}
\mathcal{O}_{D}\Psi=0,\quad
\mathcal{O}_{D}=\left(\begin{array}{cccc} 
-m_{e} &  0     & \alpha_{+} & \beta_{+}  \\
0      & -m_{e} & \beta_{-}  & \alpha_{-} \\
\widetilde{\alpha}_{-} & -\widetilde{\beta}_{+} & -m_{e} & 0\\
-\widetilde{\beta}_{-} & \widetilde{\alpha}_{+} &   0 & -m_{e}
\end{array}\right),
\end{equation}
with
\begin{eqnarray*}
\alpha_{+}&=&-i\sqrt{2}(D+\overline{\epsilon}-\overline{\rho}),\quad
\beta_{+}=i\sqrt{2}(\delta+\overline{\pi}-\overline{\alpha}),\quad
\beta_{-}=i\sqrt{2}(\overline{\delta}+\overline{\beta}-\overline{\tau}),\quad
\alpha_{-}=-i\sqrt{2}(\widetilde{\Delta}+\overline{\mu}-\overline{\gamma}),\\
\widetilde{\alpha}_{-}&=&-i\sqrt{2}(\widetilde{\Delta}+\mu-\gamma),\quad
\widetilde{\beta}_{+}=i\sqrt{2}(\delta+\beta-\tau),\quad
\widetilde{\beta}_{-}=i\sqrt{2}(\overline{\delta}+\pi-\alpha),\quad
\widetilde{\alpha}_{+}=-i\sqrt{2}(D+\epsilon-\rho),
\end{eqnarray*}
where $\mu$, $\gamma$, $\beta$, $\tau$, $\rho$, $\pi$, $\alpha$ are the spin coefficients and $D$, $\widetilde{\Delta}$, $\delta$ are the directional derivatives along the tetrad $(\mathbf{l},\mathbf{n},\mathbf{m},\overline{\mathbf{m}})$. In what follows we consider the Dirac equation in the presence of a non-extreme Kerr-de Sitter black hole. The Dirac equation in this geometry was computed and separated in \cite{Khan} with the help of the Kinnersley tetrad \cite{Kinn}. In view of the separation of the Dirac equation we choose to work with a Carter tetrad \cite{Carter} which allows for a more elegant treatment of the separation problem and leads to simpler forms of the radial and angular equations than those derived in \cite{Khan}. This symmetric null tetrad also generalizes the one employed in \cite{Davide1} where the Dirac equation in the Kerr metric has been investigated. The line element of the Kerr-de Sitter metric given by (\ref{element}) suggests that we introduce the differential forms
\[
\vartheta_\mu^1=\sqrt{\frac{\Sigma}{\Delta_r}}~dr,\quad
\vartheta_\mu^2=\sqrt{\frac{\Sigma}{\Delta_\vartheta}}~d\vartheta,\quad
\vartheta_\mu^3=\frac{\sin{\vartheta}}{\Xi}\sqrt{\frac{\Delta_\vartheta}{\Sigma}}[(r^2+a^2)d\varphi-a dt],\quad
\vartheta_\mu^4=\frac{1}{\Xi}\sqrt{\frac{\Delta_r}{\Sigma}}[dt-a\sin^2{\vartheta}~d\varphi].
\]
With the help of $(5.119)$ in \cite{Carter} we can construct a symmetric null tetrad as follows
\[
l_\mu=\frac{\vartheta_\mu^4+\vartheta_\mu^1}{\sqrt{2}},\quad
n_\mu=\frac{\vartheta_\mu^4-\vartheta_\mu^1}{\sqrt{2}},\quad
m_\mu=\frac{\vartheta_\mu^2+i\vartheta_\mu^3}{\sqrt{2}}.
\]
Hence, we have
\begin{eqnarray*}
l_\mu&=&\left(\frac{1}{\Xi}\sqrt{\frac{\Delta_r}{2\Sigma}},\sqrt{\frac{\Sigma}{2\Delta_r}},0,-\frac{a\sin^2{\vartheta}}{\Xi}\sqrt{\frac{\Delta_r}{2\Sigma}}\right),\quad
l^\mu=\left(\frac{\Xi(r^2+a^2)}{\sqrt{2\Sigma\Delta_r}},-\sqrt{\frac{\Delta_r}{2\Sigma}},0,\frac{\Xi a}{\sqrt{2\Sigma\Delta_r}}\right),\\
n_\mu&=&\left(\frac{1}{\Xi}\sqrt{\frac{\Delta_r}{2\Sigma}},-\sqrt{\frac{\Sigma}{2\Delta_r}},0,-\frac{a\sin^2{\vartheta}}{\Xi}\sqrt{\frac{\Delta_r}{2\Sigma}}\right),\quad
n^\mu=\left(\frac{\Xi(r^2+a^2)}{\sqrt{2\Sigma\Delta_r}},\sqrt{\frac{\Delta_r}{2\Sigma}},0,\frac{\Xi a}{\sqrt{2\Sigma\Delta_r}}\right),\\
m_\mu&=&\left(\frac{ia\sin{\vartheta}}{\Xi}\sqrt{\frac{\Delta_\vartheta}{2\Sigma}},0,\sqrt{\frac{\Sigma}{2\Delta_\vartheta}},-i\sin{\vartheta}\frac{r^2+a^2}{\Xi}\sqrt{\frac{\Delta_\vartheta}{2\Sigma}}\right),\quad
m^\mu=\left(-\frac{i\Xi a\sin{\vartheta}}{\sqrt{2\Sigma\Delta_\vartheta}},0,-\sqrt{\frac{\Delta_\vartheta}{2\Sigma}},-\frac{i\Xi}{\sin{\vartheta}\sqrt{2\Sigma\Delta_\vartheta}}\right).
\end{eqnarray*}
It is not difficult to verify that this tetrad satisfies the conditions in (\ref{tetrad_relations}) and it is made of null vectors, i.e. $l_\mu l^\mu=n_\mu n^\mu=m_\mu m^\mu=\overline{m}_\mu\overline{m}^\mu=0$. Using the above tetrad and $(2.3\mbox{a})-(2.3\mbox{d})$ in \cite{Davide1} the spin coefficients are found to be $\kappa=\sigma=\lambda=\nu=0$, $\rho=\mu$, $\gamma=\epsilon$, $\beta=-\alpha$, $\pi=-\tau$ and
\[
\mu=\frac{1}{\widetilde{\rho}}\sqrt{\frac{\Delta_r}{2\Sigma}},\quad
\epsilon=\frac{1}{2}\left(\mu-\frac{\Delta^{'}_r}{2\sqrt{2\Sigma\Delta_r}}\right),\quad
\alpha=\frac{1}{2}\left[\frac{\dot{\Delta}_\vartheta}{2\sqrt{2\Sigma\Delta_\vartheta}}+\sqrt{\frac{\Delta_\vartheta}{2\Sigma}}\left(\cot{\vartheta}+\frac{ia\sin{\vartheta}}{\widetilde{\rho}}\right)\right],\quad
\tau=-\frac{ia\sin{\vartheta}}{\widetilde{\rho}}\sqrt{\frac{\Delta_\vartheta}{2\Sigma}},
\]
where $\widetilde{\rho}=r+ia\cos{\vartheta}$, prime denotes differentiation with respect to the radial variable and dot means differentiation with respect to the angular variable $\vartheta$. We are now ready to give a more explicit form to the Dirac equation (\ref{uno}). If we introduce the following operators 
\[
\mathcal{D}_{\pm}=\frac{\partial}{\partial r}\mp\frac{\Xi}{\Delta_r}\left[(r^2+a^2)\frac{\partial}{\partial t}+a\frac{\partial}{\partial\varphi}\right],\quad  
\mathcal{L}_{\pm}=\frac{\partial}{\partial \vartheta}+\frac{1}{2}\cot{\vartheta}\mp\frac{i\Xi}{\Delta_\vartheta}\left(a\sin{\vartheta}\frac{\partial}{\partial t}+\csc{\vartheta}\frac{\partial}{\partial\varphi}\right),
\]
the entries of the matrix in (\ref{uno}) are computed to be
\begin{eqnarray*}
\alpha_{\pm}&=&\pm i\sqrt{\frac{\Delta_r}{\Sigma}}(\mathcal{D}_{\pm}+f(r,\vartheta)),\quad 
\beta_{\pm}=-i\sqrt{\frac{\Delta_\vartheta}{\Sigma}}(\mathcal{L}_{\mp}+g(r,\vartheta)),\\
\widetilde{\alpha}_{\pm}&=&\pm i\sqrt{\frac{\Delta_r}{\Sigma}}(\mathcal{D}_{\pm}+\overline{f(r,\vartheta)}),\quad 
\widetilde{\beta}_{\pm}=-i\sqrt{\frac{\Delta_\vartheta}{\Sigma}}(\mathcal{L}_{\mp}+\overline{g(r,\vartheta)}),\\
\end{eqnarray*}
with
\[
f(r,\vartheta)=\frac{1}{2}\left(\frac{1}{\overline{\widetilde{\rho}}}+\frac{\Delta^{'}_r}{2\Delta_r}\right),\quad
g(r,\vartheta)=\frac{1}{2}\left(\frac{\dot{\Delta}_\vartheta}{2\Delta_\vartheta}+\frac{ia\sin{\vartheta}}{\overline{\widetilde{\rho}}}\right).
\]
As in \cite{Davide1}, we replace \eqref{uno} by a modified but equivalent equation
\begin{equation} \label{vdoppio}
\mathcal{W}\widehat{\psi}(t,r,\vartheta,\varphi)=0,\quad \mathcal{W}=\Gamma S^{-1}\mathcal{O}_{D}S,
\end{equation}
where $\widehat{\psi}=S^{-1}\Psi=(\hat{F}_{1},\hat{F}_{2},\hat{G}_{1},\hat{G}_{2})^{T}$ and $\Gamma$ and $S$ are non singular $4\times 4$ matrices, whose elements may depend on the variables $r$ and $\vartheta$. Proceeding as in Lemma 2.1 in \cite{Davide1} it can be showed that for $r_{+}<r<r_c$ there exist non singular $4\times 4$ matrices
\begin{equation}\label{S}
S=(\Delta_r\Delta_\vartheta)^{-1/4}\mbox{diag}(\widetilde{\rho}^{-1/2},\widetilde{\rho}^{-1/2},\overline{\widetilde{\rho}}^{-1/2},\overline{\widetilde{\rho}}^{-1/2}),\quad
\Gamma=\mbox{diag}(-i\overline{\widetilde{\rho}},i\overline{\widetilde{\rho}},i\widetilde{\rho},-i\widetilde{\rho})
\end{equation}
with $\mbox{det}(S)=(\Sigma\Delta_r\Delta_\vartheta)^{-1}$ and $\mbox{det}(\Gamma)=\Sigma^2$ such that the operator $\mathcal{W}$ decomposes into the sum of an operator containing only derivatives respect to the variables $t$, $r$ and $\varphi$ and of an operator involving only derivatives respect to $t$, $\vartheta$ and $\varphi$. More precisely, we have
\begin{equation} \label{unoo}
\mathcal{W}=\mathcal{W}_{(t,r,\varphi)}+\mathcal{W}_{(t,\vartheta,\varphi)}
\end{equation}
with
\begin{eqnarray*}
\mathcal{W}_{(t,r,\varphi)}&=&\left( \begin{array}{cccc}
                            im_{e}r&0&\sqrt{\Delta_r}\mathcal{D}_{+}&0\\
                            0&-im_{e}r&0&\sqrt{\Delta_r}\mathcal{D}_{-}\\
                            \sqrt{\Delta_r}\mathcal{D}_{-}&0&-im_{e}r&0\\
                             0&\sqrt{\Delta_r}\mathcal{D}_{+}&0&im_{e}r
                            \end{array} \right),\\
\mathcal{W}_{(t,\vartheta,\varphi)}&=&\left( \begin{array}{cccc}
                            am_{e}\cos{\vartheta}&0&0&-\sqrt{\Delta_\vartheta}\mathcal{L}_{-}\\
                            0&-am_{e}\cos{\vartheta}&\sqrt{\Delta_\vartheta}\mathcal{L}_{+}&0\\
                            0&-\sqrt{\Delta_\vartheta}\mathcal{L}_{-}&am_{e}\cos{\vartheta}&0\\
                            \sqrt{\Delta_\vartheta}\mathcal{L}_{+}&0&0&-am_{e}\cos{\vartheta}
                            \end{array} \right).
\end{eqnarray*}
\begin{remark}
It should be noted that this decomposition continues to hold even for extreme Kerr-de Sitter black holes. In addition, since the Kerr - de Sitter metric goes over into the de Sitter metric for $M=a=0$, we obtain from the decomposition above a similar result for the Dirac equation in a de Sitter universe. Taking into account that in this case $\widehat{\Delta}_\vartheta=1$ and $\widehat{\Delta}_r=r^2(1-\Lambda r^2/3)$ with $r\in(0,\sqrt{3/\Lambda})$ we can find non singular $4\times 4$ matrices
\[
\widehat{S}=\Delta_r^{-1/4} \mbox{diag}(r^{-1/2},r^{-1/2},r^{-1/2},r^{-1/2}),\quad
\widehat{\Gamma}= \mbox{diag}(-ir,ir,ir,-ir)
\]
with $\mbox{det}(\widehat{S})=(r^2\widehat{\Delta}_r)^{-1}$ and $\mbox{det}(\widehat{\Gamma})=r^4$ such that the operator $\mathcal{W}^{(\mbox{dS})}$ decomposes as follows $\mathcal{W}^{(\mbox{dS})}=\mathcal{W}_{(t,r,\varphi)}^{(\mbox{dS})}+\mathcal{W}_{(t,\vartheta,\varphi)}^{(\mbox{dS})}$ with
\[
\mathcal{W}_{(t,r,\varphi)}^{(\mbox{dS})}=\left( \begin{array}{cccc}
                            im_{e}r&0&\sqrt{\widehat{\Delta}_r}\widehat{\mathcal{D}}_{+}&0\\
                            0&-im_{e}r&0&\sqrt{\widehat{\Delta}_r}\widehat{\mathcal{D}}_{-}\\
                            \sqrt{\widehat{\Delta}_r}\widehat{\mathcal{D}}_{-}&0&-im_{e}r&0\\
                             0&\sqrt{\widehat{\Delta}_r}\widehat{\mathcal{D}}_{+}&0&im_{e}r
                            \end{array} \right),\quad
\mathcal{W}_{(t,\vartheta,\varphi)}^{(\mbox{dS})}=\left( \begin{array}{cccc}
                            0&0&0&-\widehat{\mathcal{L}}_{-}\\
                            0&0&\widehat{\mathcal{L}}_{+}&0\\
                            0&-\widehat{\mathcal{L}}_{-}&0&0\\
                            \widehat{\mathcal{L}}_{+}&0&0&0
                            \end{array} \right),
\]
where
\[
\widehat{\mathcal{D}}_{\pm}=\frac{\partial}{\partial r}\mp\frac{1}{1-\frac{\Lambda r^2}{3}}\frac{\partial}{\partial t},\quad  
\widehat{\mathcal{L}}_{\pm}=\frac{\partial}{\partial \vartheta}+\frac{1}{2}\cot{\vartheta}\mp\csc{\vartheta}\frac{\partial}{\partial\varphi},
\]
\end{remark}
We compute now the commutation relations needed to construct a symmetry operator of the Dirac equation in the Kerr-de Sitter metric generalizing the one obtained in \cite{Davide1} for the same equation in the Kerr metric. To this purpose, let $\widehat{\psi}\in C^{(2)}(\Omega)$ with $\Omega=\mathbb{R}\times(r_+,r_c)\times[0,\pi)\times[0,2\pi]$. Then, it is not difficult to verify that the following commutators hold
\begin{equation*}
\left[\mathcal{W}_{(t,r,\varphi)},\mathcal{W}_{(t,\vartheta,\varphi)}\right]=0,\quad \left[\mathcal{W}_{(t,r,\varphi)},\mathcal{W}\right]=\left[\mathcal{W}_{(t,\vartheta,\varphi)},\mathcal{W}\right]=0.
\end{equation*} 
Moreover, the matrix $\Gamma$ splits into the sum $\Gamma=\Gamma_{(r)}+\Gamma_{(\vartheta)}$ with $\Gamma_{(r)}=i\mbox{diag}(-r,r,r,-r)$ and $\Gamma_{(\vartheta)}=a~\mbox{diag}(-\cos{\vartheta},\cos{\vartheta},-\cos{\vartheta},\cos{\vartheta})$ satisfying the commutation relations
\begin{equation*}
\left[\Gamma_{(r)},\Gamma_{(\vartheta)}\right]=0,\quad\left[\Gamma_{(r)},\mathcal{W}_{(t,\vartheta,\varphi)}\right]=\left[\Gamma_{(\vartheta)},\mathcal{W}_{(t,r,\varphi)}\right]=0.
\end{equation*}
Since the Kerr-de Sitter metric is axially symmetric, it is natural to make the following ansatz for the spinors $\widehat{\psi}$ entering in \eqref{vdoppio}, namely
\begin{equation} \label{psi}
\widehat{\psi}(t,r,\vartheta,\varphi)=e^{i\omega t}e^{i\left(k+\frac{1}{2}\right)\varphi}\widetilde{\psi}(r,\vartheta)
\end{equation}
where $\omega$ and $k\in\mathbb{Z}$ are the energy and the azimuthal quantum number of the particle,respectively, and $\widetilde{\psi}(r,\vartheta)\in\mathbb{C}^{4}$. Inserting \eqref{psi} in \eqref{unoo}, it can be verified that $\widetilde{\psi}(r,\vartheta)$ satisfies the equation
\begin{equation} \label{mod}
\left(\mathcal{W}_{(r)}+\mathcal{W}_{(\vartheta)}\right)\widetilde{\psi}=0,
\end{equation}
where
\begin{eqnarray}
\mathcal{W}_{(r)}&=&\left( \begin{array}{cccc}
                            im_{e}r&0&\sqrt{\Delta_r}\widetilde{\mathcal{D}}_{+}&0\\
                            0&-im_{e}r&0&\sqrt{\Delta_r}\widetilde{\mathcal{D}}_{-}\\
                            \sqrt{\Delta_r}\widetilde{\mathcal{D}}_{-}&0&-im_{e}r&0\\
                             0&\sqrt{\Delta_r}\widetilde{\mathcal{D}}_{+}&0&im_{e}r
                            \end{array} \right) \label{mod1}\\
\mathcal{W}_{(\vartheta)}&=&\left( \begin{array}{cccc}
                            am_{e}\cos{\vartheta}&0&0&-\sqrt{\Delta_\vartheta}\widetilde{\mathcal{L}}_{-}\\
                            0&-am_{e}\cos{\vartheta}&\sqrt{\Delta_\vartheta}\widetilde{\mathcal{L}}_{+}&0\\
                            0&-\sqrt{\Delta_\vartheta}\widetilde{\mathcal{L}}_{-}&am_{e}\cos{\vartheta}&0\\
                            \sqrt{\Delta_\vartheta}\widetilde{\mathcal{L}}_{+}&0&0&-am_{e}\cos{\vartheta}
                            \end{array} \right) \label{mod2}
\end{eqnarray}
with
\[
\widetilde{\mathcal{D}}_{\pm}=\frac{\partial}{\partial r}\mp i\frac{\Xi}{\Delta_r}\left[\omega(r^2+a^2)+a\widehat{k}\right],\quad
\widetilde{\mathcal{L}}_{\pm}=\frac{\partial}{\partial \vartheta}+A_\pm(\vartheta)
\]
where
\[
A_\pm(\vartheta)=\frac{1}{2}\cot{\vartheta}\pm\frac{\Xi}{\Delta_\vartheta}\left(a\omega\sin{\vartheta}+\frac{\widehat{k}}{\sin{\vartheta}}\right),\quad\widehat{k}=k+\frac{1}{2}.
\]
At this point it is instructive to compare the above expressions with the operators $\mathcal{D}_0$, $\mathcal{D}_{1/2}$, $\mathcal{L}_{1/2}$, and $\mathcal{L}^\dagger_{1/2}$ represented by equations $(2.6)-(2.9)$ in \cite{Davide3} where the separation of the Dirac equation in the Kerr-de Sitter metric was achieved by using the Kinnersley tetrad. It can be immediately observed that the major benefit of using the Carter tetrad is to transform away the terms 
\[
\frac{1}{2\Delta_r}\frac{d\Delta_r}{dr},\quad \frac{1}{2\sqrt{\Delta_\vartheta}\sin{\vartheta}}\frac{d}{d\vartheta}(\sqrt{\Delta_\vartheta}\sin{\vartheta})
\]
appearing in $(2.6)-(2.9)$ in \cite{Davide3}. This in turn will lead to a very simplified form of the radial and angular systems. Let us set
\begin{equation} \label{Chandra1}
\widetilde{\psi}(r,\vartheta)=(R_{-}(r)S_{+}(\vartheta),R_{+}(r)S_{-}(\vartheta),R_{+}(r)S_{+}(\vartheta),R_{-}(r)S_{-}(\vartheta))^{T}.
\end{equation}
According to Chandrasekhar Ansatz \cite{Chandra} and proceeding as in \cite{Davide1} equation \eqref{mod} splits into the following two systems of linear first order differential equations, namely 
\begin{equation} \label{radial} 
\left( \begin{array}{cc}
     \sqrt{\Delta_r}\widetilde{\mathcal{D}}_{-}&-im_{e}r-\lambda\\
     im_{e}r-\lambda&\sqrt{\Delta_r}\widetilde{\mathcal{D}}_{+}
           \end{array} \right)\left( \begin{array}{cc}
                                     R_{-} \\
                                     R_{+}
                                     \end{array}\right)=0,
\end{equation}
\begin{equation} \label{angular}
\left( \begin{array}{cc}
     -\sqrt{\Delta_\vartheta}\widetilde{\mathcal{L}}_{-} & \lambda+am_{e}\cos{\vartheta}\\
                \lambda-am_{e}\cos{\vartheta} & \sqrt{\Delta_\vartheta}\widetilde{\mathcal{L}}_{+}
           \end{array} \right)\left( \begin{array}{cc}
                                     S_{-} \\
                                     S_{+}
                                     \end{array}\right)=0
\end{equation} 
where $\lambda$ is a separation constant. Note that when $a=\Lambda=0$ the angular eigenfunctions $S_{\pm}$ reduce to the well-known spin-weighted spherical harmonics whereas for $a\neq 0=\Lambda$ the same eigenfunctions satisfy a Heun equation \cite{Davide2}. Finally, in the general case of non vanishing values of the parameters $\Lambda$ and $a$ it has been shown in \cite{Davide3} that $S_{\pm}$ obey a generalized Heun equation (GHE). Since the GHE has been scarcely studied in the mathematical literature made exceptions of \cite{S1} and \cite{S2}, the next section will be devoted to study the spectrum of the angular eigenvalue problem, the dependence of the eigenvalues upon the relevant physical parameters, and to obtain series representations for the eigenfunctions. We conclude this section by giving a physical interpretation to the separation constant. To this purpose we will use the Chandrasekhar ansatz to generate a new operator $J$ and we will show that it commutes with the Dirac operator $\mathcal{O}_D$ in the Kerr-de Sitter metric, thus being a symmetry operator for $\mathcal{O}_D$. It will turn out that $\lambda$ can be seen as an eigenvalue of the operator $J$ whose interpretation will emerge from taking the limit $\Lambda\to 0$ in the expression for $J$. Since this limit coincides with the square root of the squared total angular momentum for a Dirac particle in the Kerr metric obtained in \cite{Davide1}, we can conclude that $J$ is the squared total angular momentum for a Dirac particle in the Kerr-de Sitter metric. Proceeding as in \cite{Davide1} we can construct a matrix operator
\[
\widehat{J}=\left(
\begin{array}{cccc}
0&0&-ia\widetilde{\rho}\cos{\vartheta}\frac{\sqrt{\Delta_r}}{\Sigma}\mathcal{D}_+&r\widetilde{\rho}\frac{\sqrt{\Delta_\vartheta}}{\Sigma}\mathcal{L}_{-}\\
0&0&-r\widetilde{\rho}\frac{\sqrt{\Delta_\vartheta}}{\Sigma}\mathcal{L}_{+}&-ia\widetilde{\rho}\cos{\vartheta}\frac{\sqrt{\Delta_r}}{\Sigma}\mathcal{D}_{-}\\
ia\overline{\widetilde{\rho}}\cos{\vartheta}\frac{\sqrt{\Delta_r}}{\Sigma}\mathcal{D}_{-}&r\overline{\widetilde{\rho}}\frac{\sqrt{\Delta_\vartheta}}{\Sigma}\mathcal{L}_{-}&0&0\\
-r\overline{\widetilde{\rho}}\frac{\sqrt{\Delta_\vartheta}}{\Sigma}\mathcal{L}_{+}&ia\overline{\widetilde{\rho}}\cos{\vartheta}\frac{\sqrt{\Delta_r}}{\Sigma}\mathcal{D}_{+}&0&0
\end{array}
\right)
\]
such that $\widehat{J}\widehat{\psi}=\lambda\widehat{\psi}$. Let $J=S\widehat{J}S^{-1}$ with $S$ defined as in (\ref{S}). Then, 
\[
J=S\Gamma^{-1}(\Gamma_{(\vartheta)}\mathcal{W}_{(r)}-\Gamma_{(r)}\mathcal{W}_{(\vartheta)})S^{-1}
\]
is a symmetry operator for the formal Dirac operator $\mathcal{O}_D$ since $[\mathcal{O}_D,J]=0$. We skip the proof because the method and the computation is essentially the same as those appearing in Lemma~$2.4$ in \cite{Davide1}. Finally, letting $\Lambda\to 0$ in the above expression reproduces the square root of the squared total angular momentum for a Dirac particle in the Kerr metric obtained in \cite{Davide1}.

\section{The angular eigenvalue problem}
We start by observing that the angular eigenvalue problem (\ref{angular}) admits the discrete symmetry $P:\vartheta\longrightarrow\pi-\vartheta$ so that
\[
PS_{+}(\vartheta)=S_{+}(\pi-\vartheta)=S_{-}(\vartheta),\quad PS_{-}(\vartheta)=S_{-}(\pi-\vartheta)=S_{+}(\vartheta).
\] 
This means that if we decide to eliminate $S_{+}$ in favor of $S_{-}$ in (\ref{angular}) to get a second order differential equation for $S_{-}$ the corresponding equation for $S_{+}$ can be obtained by applying the transformation $P$ to the equation satisfied by $S_{-}$. It is not difficult to verify that the equation satisfied by $S_{-}$ is
\begin{equation}\label{Smeno}
\sqrt{\Delta_\vartheta}\widetilde{\mathcal{L}}_{+}(\sqrt{\Delta_\vartheta}\widetilde{\mathcal{L}}_{-}S_{-})+
\frac{am_e\Delta_\vartheta\sin{\vartheta}}{\lambda+am_e\cos{\vartheta}}\widetilde{\mathcal{L}}_{-}S_{-}+
(\lambda^2-a^2 m_e^2\cos^2{\vartheta})S_{-}=0.
\end{equation}
\begin{remark}
Note that in the last term of equation $(3.39)$ in \cite{Khan} and equation $(3.2)$ in \cite{Davide3} there is a typo and the term $\lambda^2+a^2 m_e^2\cos^2{\vartheta}$ should be replaced by $\lambda^2-a^2 m_e^2\cos^2{\vartheta}$.
\end{remark}
The differential equation (\ref{Smeno}) has two singularities at $\vartheta=0,\pi$. An additional singularity emerges if $\lambda$ belongs to the interval $(-am_e,am_e)$. In the case that $\lambda=\pm am_e$ the third singularity coincides with the singularities at $\pi$ or $0$, respectively. In what follows we will assume that such a singularity does not belong to the interval $(0,\pi)$. Let
\[
S(\vartheta)=\left(\begin{array}{c}
S_{+}(\vartheta)\\
S_{-}(\vartheta)
\end{array}
\right)=\frac{\widetilde{S}(\vartheta)}{\sqrt{\sin{\vartheta}}}=\frac{1}{\sqrt{\sin{\vartheta}}}\left(\begin{array}{c}
\widetilde{S}_{+}(\vartheta)\\
\widetilde{S}_{-}(\vartheta)
\end{array}
\right),
\]
and introduce the coordinate transformation $dx/d\vartheta=1/\sqrt{\Delta_\vartheta}$. A simple integration gives
\begin{equation}\label{xt}
x(\vartheta)=\frac{1}{\sqrt{\Xi}}\int_0^\vartheta\frac{d\widetilde{\vartheta}}{\sqrt{1-k^2\sin^2{\widetilde{\vartheta}}}},\quad 0<k^2=\frac{\sigma}{1+\sigma}<1,\quad\sigma=\frac{\Lambda}{3}a^2,
\end{equation}
where the integration constant has been chosen to be zero since $x=\vartheta$ in the limit $\sigma\to 0$. With the help of $110.02$, $110.06$, and $113.01$ in \cite{Byrd} we find that
\begin{equation}\label{solnx}
x(\vartheta)
= \left\{ \begin{array}{ll}
         x_1(\vartheta)=\frac{1}{\sqrt{\Xi}}F\left(\sin{\vartheta},\sqrt{\frac{\sigma}{1+\sigma}}\right) & \mbox{if $0<\vartheta\leq \pi/2$},\\
         x_2(\vartheta)=\frac{1}{\sqrt{\Xi}}K\left(\sqrt{\frac{\sigma}{1+\sigma}}\right)-F(\cos{\vartheta},i\sqrt{\sigma}) & \mbox{if $\pi/2<\vartheta\leq \pi$}.\end{array} \right. 
\end{equation}
Here, $K$ and $F$ denote the complete and incomplete elliptic integral of the first kind, respectively. Furthermore, using $110.06$ in \cite{Byrd} it is not difficult to verify that $x(0)=0$, 
\[
x_1(\pi/2)=x_2(\pi/2)=x_0,\quad
x_2(\pi)=x_0+K(i\sqrt{\sigma}),\quad 
x_0=\frac{1}{\sqrt{\Xi}}K\left(\sqrt{\frac{\sigma}{1+\sigma}}\right).
\]
This means that the interval $(0,\pi)$ will be mapped by the coordinate transformation $x=x(\vartheta)$ to the positive interval $I=(0,x_2(\pi))$. Observe that for $\sigma\to 0$ we can use $111.02$ in \cite{Byrd} to show that the interval $I$ reduces as expected to the interval $(0,\pi)$. Last but not least, we have
\[
\lim_{\sigma\to 0}x_1(\vartheta)=\vartheta=
\lim_{\sigma\to 0}x_2(\vartheta)
\]
signalizing that (\ref{solnx}) reduces correctly to $x=\vartheta$ in the limit $\sigma\to 0$. The coordinate transformation $x=x(\vartheta)$ can be inverted and expressed in terms of the Jakobi elliptic functions as follows
\begin{equation}\label{solntheta}
\vartheta(x)
= \left\{ \begin{array}{ll}
         \vartheta_1(x)=\sin^{-1}\left(\mbox{sn}\left(\sqrt{\Xi}x,\sqrt{\frac{\sigma}{1+\sigma}}\right)\right) & \mbox{if $0<x\leq x_0$},\\
         \vartheta_2(x)=\cos^{-1}\left(\mbox{sn}\left(\frac{K\left(\sqrt{\frac{\sigma}{1+\sigma}}\right)-\sqrt{\Xi}x}{\sqrt{\Xi}},i\sqrt{\sigma}\right)\right) & \mbox{if $x_0<x<x_2$}.\end{array} \right. 
\end{equation}
Finally, the angular system can be rewritten as a Dirac system
\begin{equation}\label{formal}
(\mathcal{U}\widetilde{S})(x)=\left(
\begin{array}{cc}
0&1\\
-1&0
\end{array}
\right)\frac{d\widetilde{S}}{dx}+\displaystyle{
\left(
\begin{array}{cc}
-am_e\cos{\vartheta(x)}&-\frac{\Xi}{\sqrt{\Delta_x}}\left[a\omega\sin{\vartheta(x)}+\frac{\widehat{k}}{\sin{\vartheta(x)}}\right]\\
-\frac{\Xi}{\sqrt{\Delta_x}}\left[a\omega\sin{\vartheta(x)}+\frac{\widehat{k}}{\sin{\vartheta(x)}}\right] &am_e\cos{\vartheta(x)}
\end{array}
\right)}\widetilde{S}(x)=\lambda\widetilde{S}(x)
\end{equation}
with $x\in I$ and $\theta(x)$ defined as in (\ref{solntheta}). Let us rewrite the formal differential operator $\mathcal{U}$  as
\begin{equation}\label{exprU}
\mathcal{U}=\left(
\begin{array}{cc}
-am_e\cos{\vartheta(x)}&\frac{d}{dx}-\frac{\Xi}{\sqrt{\Delta_x}}\left[a\omega\sin{\vartheta(x)}+\frac{\widehat{k}}{\sin{\vartheta(x)}}\right]\\
-\frac{d}{dx}-\frac{\Xi}{\sqrt{\Delta_x}}\left[a\omega\sin{\vartheta(x)}+\frac{\widehat{k}}{\sin{\vartheta(x)}}\right] &am_e\cos{\vartheta(x)}
\end{array}
\right)
\end{equation}
which acts on the Hilbert space $L^2(I,dx)^2$. To simplify the following analysis we express $\mathcal{U}$ as the sum of the unbounded operator $\mathcal{U}_u$ and the bounded operator $\mathcal{U}_b$ given by
\[
\mathcal{U}_u=\left(
\begin{array}{cc}
0&\frac{d}{dx}-\frac{\Xi\widehat{k}}{\sqrt{\Delta_x}\sin{\vartheta(x)}}\\
-\frac{d}{dx}-\frac{\Xi\widehat{k}}{\sqrt{\Delta_x}\sin{\vartheta(x)}}&0
\end{array}
\right),\quad
\mathcal{U}_b=\left(
\begin{array}{cc}
-am_e\cos{\vartheta(x)}&-\Xi a\omega\frac{\sin{\vartheta(x)}}{\sqrt{\Delta_x}}\\
-\Xi a\omega\frac{\sin{\vartheta(x)}}{\sqrt{\Delta_x}}&am_e\cos{\vartheta(x)}
\end{array}
\right).
\]
In the following we always assume that $\Lambda$, $a$, $m_e$, $\omega$ are real and $k\in\mathbb{Z}$. The minimal operator associated to $\mathcal{U}$ is $A^{min}_u\Phi=\mathcal{U}_u\Phi$ with domain of definition $D(A^{min}_u)=C_0^\infty(I)^2$. Let us introduce the inner product 
\[
\langle\Phi,\widetilde{\Phi}\rangle=\int_I dx~\Phi(x)\widetilde{\Phi}^{*}(x)
\]
where $*$ denotes complex conjugation followed by transposition. Then, it is not difficult to verify that $A^{min}_u$ is formally self-adjoint. Hence, by Theorem~$5.4$ in \cite{Weidmann1} the operator $A^{min}_u$ will be closable. Let $A_u$ denote the closure of $A^{min}_u$. We can prove the following
\begin{theorem}\label{THMI}
 The operator $A_u$ is self-adjoint if and only if $k\in\mathbb{R}\backslash(-1,0)$ and a fortiori for every $k\in\mathbb{Z}$.
\end{theorem}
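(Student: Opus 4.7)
The plan is to apply the Weyl limit-point/limit-circle theory for one-dimensional Dirac systems on the open interval $I=(0,x_2(\pi))$. Since $A_u$ is the closure of a formally self-adjoint minimal operator, a classical theorem (see e.g.\ Weidmann) tells us that $A_u$ is self-adjoint iff the formal expression $\mathcal{U}_u$ is in the limit-point case at both singular endpoints. So the task reduces to an endpoint analysis at $x=0$ and $x=x_2(\pi)$, which are the only singular points (the coefficient is smooth on the open interval).

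First I would extract the leading singular behaviour of $V(x):=\Xi\widehat{k}/\bigl(\sqrt{\Delta_\vartheta(\vartheta(x))}\sin\vartheta(x)\bigr)$ near each endpoint. Using \eqref{solnx}--\eqref{solntheta} one has, as $\vartheta\to 0^+$, $\sin\vartheta\sim\vartheta$, $\sqrt{\Delta_\vartheta}\to\sqrt{\Xi}$ and $x(\vartheta)\sim\vartheta/\sqrt{\Xi}$, hence $V(x)=\widehat{k}/x+O(x)$ as $x\to 0^+$. With $y:=x_2(\pi)-x$ the analogous computation near $\vartheta=\pi$ (second branch of \eqref{solnx}) gives $V(x)=\widehat{k}/y+O(y)$ as $y\to 0^+$. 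Near each endpoint $\mathcal{U}_u$ is therefore a bounded perturbation of the model radial Dirac operator $\mathcal{U}_0=\bigl(\begin{smallmatrix}0 & \partial_x-\widehat{k}/x\\ -\partial_x-\widehat{k}/x & 0\end{smallmatrix}\bigr)$.

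Next I would compute the indicial exponents by solving $\mathcal{U}_u\Phi=0$ to leading order at $x=0$. At that order the system decouples: $\Phi_1'=-(\widehat{k}/x)\Phi_1$ and $\Phi_2'=(\widehat{k}/x)\Phi_2$, whose independent solutions are $\Phi\sim(x^{-\widehat{k}},0)^T$ and $\Phi\sim(0,x^{\widehat{k}})^T$. Square integrability near $0$ requires $x^{\mp\widehat{k}}\in L^2(0,\varepsilon)$, i.e.\ $\widehat{k}<1/2$ and $\widehat{k}>-1/2$ respectively. Both solutions lie in $L^2$ (the limit-circle situation) precisely when $-1/2<\widehat{k}<1/2$, which translates into $k\in(-1,0)$; otherwise at least one solution fails to be $L^2$ and we are in the limit-point case. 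A Levinson-type asymptotic argument (or the standard stability of the Weyl dichotomy under bounded perturbations of Dirac-type systems) shows that the $O(x)$ correction to the potential does not modify the classification for the full operator $\mathcal{U}_u$. The same analysis at the right endpoint yields exactly the same dichotomy because $V$ enters symmetrically in $y=x_2(\pi)-x$.

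Combining the two endpoints, $\mathcal{U}_u$ is limit point at both ends iff $|\widehat{k}|\geq 1/2$, i.e.\ iff $k\in\mathbb{R}\setminus(-1,0)$; for $k\in\mathbb{Z}$ this is automatic since $\widehat{k}=k+\tfrac12$ has modulus at least $1/2$. By the Weyl theorem this is equivalent to the self-adjointness of $A_u$. The main technical obstacle I anticipate is the rigorous justification that the $O(x)$ (resp.\ $O(y)$) bounded remainder in $V$ does not alter the limit-point/limit-circle status: this requires either constructing asymptotic fundamental solutions of $\mathcal{U}_u\Phi=\lambda\Phi$ via a Levinson-type theorem for first-order $2\times 2$ systems with an $x^{-1}$ singularity, or invoking an abstract perturbation-of-deficiency-index result; everything else is bookkeeping.
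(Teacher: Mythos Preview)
Your overall strategy coincides with the paper's: reduce self-adjointness of $A_u$ to the Weyl limit-point/limit-circle alternative for $\mathcal{U}_u$ at the two singular endpoints, and then analyse solutions of $\mathcal{U}_u\Phi=0$ there. The endpoint asymptotics you extract, the indicial exponents $x^{\pm\widehat{k}}$, and the resulting dichotomy $|\widehat{k}|\geq 1/2 \Leftrightarrow$ limit point are all correct.

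The one place where the paper's argument differs from yours---and is materially simpler---is the treatment of the equation $\mathcal{U}_u\Phi=0$. You solve it only to leading order and then have to invoke a Levinson-type or perturbation-of-deficiency-index result to show that the bounded remainder $V(x)-\widehat{k}/x=O(x)$ does not disturb the classification; you rightly flag this as the main technical obstacle. The paper sidesteps this entirely by noticing that, because $\mathcal{U}_u$ is purely off-diagonal, the system $\mathcal{U}_u\Phi=0$ decouples \emph{exactly} (not merely at leading order) into two scalar first-order equations $\Phi_1'=-V\Phi_1$ and $\Phi_2'=V\Phi_2$. These integrate in closed form to
\[
\Phi_1\propto\left(\tan\tfrac{\vartheta(x)}{2}\right)^{-\widehat{k}}e^{\sqrt{\sigma}\,\widehat{k}\arctan(\sqrt{\sigma}\cos\vartheta(x))},\qquad
\Phi_2\propto\left(\tan\tfrac{\vartheta(x)}{2}\right)^{\widehat{k}}e^{-\sqrt{\sigma}\,\widehat{k}\arctan(\sqrt{\sigma}\cos\vartheta(x))},
\]
and the $L^2$-membership near $0$ and near $x_2(\pi)$ can then be read off by elementary estimates on $\tan(\vartheta/2)$, with no asymptotic-integration machinery needed. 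So your proof is valid, but the perturbation step you anticipate as delicate is in fact unnecessary once you exploit the exact decoupling; that is the trick you are missing.
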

\begin{proof}
We show that $\mathcal{U}_u$ is in the limit point case at $\vartheta=0$ and $\vartheta=\pi$. A fundamental system of the differential equation $\mathcal{U}_u\Phi=0$ is
\begin{eqnarray*}
\Phi_1(x)&=&\left(\tan{\frac{\vartheta(x)}{2}}\right)^{-\left(k+\frac{1}{2}\right)}e^{\sqrt{\sigma}\left(k+\frac{1}{2}\right)\tan^{-1}(\sqrt{\sigma}\cos{\vartheta(x)})}
\left(\begin{array}{c}
1\\
0
\end{array}\right),\\
\Phi_2(x)&=&\left(\tan{\frac{\vartheta(x)}{2}}\right)^{k+\frac{1}{2}}e^{-\sqrt{\sigma}\left(k+\frac{1}{2}\right)\tan^{-1}(\sqrt{\sigma}\cos{\vartheta(x)})}
\left(\begin{array}{c}
0\\
1
\end{array}\right).
\end{eqnarray*} 
Let us analyze the square integrability of these solutions. To this purpose let $k\geq 0$, $\alpha\in (x_0,x_2)$, and $\beta\in(0,x_0)$. Then, we have
\[
\int_\alpha^{x_2}dx~|\Phi_2(x)|^2>\int_\alpha^{x_2}dx~\left(\tan{\frac{\vartheta(x)}{2}}\right)^{2k+1}
>\int_\alpha^{x_2}dx~\tan{\frac{\vartheta(x)}{2}}=\int_{\vartheta(\alpha)}^{\pi}d\vartheta~\frac{\tan{(\vartheta/2)}}{\sqrt{\Delta_\vartheta}}
\]
\[
=\left.\ln{\frac{2\Xi-4\sigma\cos^2{(\vartheta/2)}+2\Xi\sqrt{\Xi-2\sigma\sin^2{\vartheta}}}{\cos^2{(\vartheta/2)}}}\right|^\pi_{\vartheta(\alpha)}=+\infty,
\]
where in the first majorization we used the fact that $\cos{\vartheta(x)}<0$ for $x\in(\alpha,x_2)$ together with $\tan(\vartheta(x)/2)>0$ since $\theta/2\in(\vartheta(\alpha),\pi/2)$ and therefore $\tan^{-1}{(\sqrt{\sigma}\cos{\vartheta(x)})}$ is negative. The second majorization is obtained by observing that $\vartheta(\alpha)/2>\pi/4$ implies that $\tan{(\vartheta(x)/2)}>0$. Furthermore, $\vartheta(x)/2>\pi/4$ and hence $\tan{(\vartheta(x)/2)}>1$ on the interval $L=(\vartheta(\alpha)/2,\pi/2)$. This implies that $(\tan{(\vartheta(x)/2)})^{2k+1}>\tan{(\vartheta(x)/2)}$ on $L$. Moreover, we also have
\[
\int_0^\beta dx~|\Phi_1(x)|^2>\int_0^\beta dx~\left(\cot{\frac{\vartheta(x)}{2}}\right)^{2k+1}
>\int_0^\beta dx~\cot{\frac{\vartheta(x)}{2}}=\int_0^{\vartheta(\beta)} d\vartheta~\frac{\cot{(\vartheta/2)}}{\sqrt{\Delta_\vartheta}}>
\]
\[
\frac{1}{\sqrt{\Xi}}\int_0^{\vartheta(\beta)} d\vartheta~\cot{(\vartheta/2)}
=\left.\ln{\sin{\frac{\vartheta}{2}}}\right|_0^{\vartheta(\beta)}=+\infty,
\]
where we used the fact that for $\vartheta\in(0,\beta)$ we have $\sqrt{\Delta_\vartheta}<\sqrt{\Xi}$. On the other hand, we also have the following estimates
\[
\int^{x_2}_\beta dx~|\Phi_1(x)|^2<e^{\frac{\pi}{2}\sqrt{\sigma}(2k+1)}\int^{x_2}_\beta dx~\left(\cot{\frac{\vartheta(x)}{2}}\right)^{2k+1}
\leq e^{\frac{\pi}{2}\sqrt{\sigma}(2k+1)}\left(\cot{\frac{\vartheta(\beta)}{2}}\right)^{2k+1}\int^{x_2}_\beta dx<+\infty
\]
and
\[
\int^{\alpha}_0 dx~|\Phi_2(x)|^2<e^{\frac{\pi}{2}\sqrt{\sigma}(2k+1)}\int^{\alpha}_0 dx~\left(\tan{\frac{\vartheta(x)}{2}}\right)^{2k+1}
\leq e^{\frac{\pi}{2}\sqrt{\sigma}(2k+1)}\left(\tan{\frac{\vartheta(\alpha)}{2}}\right)^{2k+1}\int^{\alpha}_0 dx<+\infty.
\]
This shows that in the case $k\geq 0$ the solution $\Phi_1$ lies right but not left in $L^2(I,dx)^2$, whereas the solution $\Phi_2$ lies left in $L^2(I,dx)^2$ but it does not lie right in $L^2(I,dx)^2$. For $k\leq -1$ the same holds true for $\Phi_1$ and $\Phi_2$ exchanged. According to Weyl's alternative it follows that for $k\in\mathbb{R}\backslash(-1,0)$ the formal differential operator $\mathcal{U}_u$ is in the limit point case both at $0$ and at $\pi$. Hence, Theorem~2.7 in \cite{Weidmann2} ensures that the closure of $A^{min}_u$ is self-adjoint. To show that $A_u$ is not self-adjoint for $k\in(-1,0)$ we check that $\Phi_1$ and $\Phi_2$ are in $L^2(I,dx)^2$, thus $\mathcal{U}_u$ is in the limit circle case both at $0$ and $\pi$. Then, again by Theorem~2.7 in \cite{Weidmann2} the assertion follows. We give a proof for $\Phi_2\in L^2(I,dx)^2$ in the case $k\in(-1,1/2]$, the remaining cases can be treated similarly. From the initial assumption we have $2k+1\in(-1,0]$. Hence, it follows from the inequality $\sin{\frac{\gamma}{2}}\geq\frac{\gamma}{\pi}$ with $\gamma\in(0,\pi)$ and the monotonicity of the cosine and tangent functions
\[
\int^{x_2}_0 dx~|\Phi_2(x)|^2=\int^{x_2}_0~dx|\Phi_2(x)|^2=
e^{\frac{\pi}{2}\sqrt{\sigma}|2k+1|}\int^{x_2}_0~dx\left(\tan{\frac{\vartheta(x)}{2}}\right)^{2k+1}=
e^{\frac{\pi}{2}\sqrt{\sigma}|2k+1|}\int_{0}^{\pi}~\frac{d\vartheta}{\sqrt{\Delta_\vartheta}}~\left(\tan{\frac{\vartheta}{2}}\right)^{2k+1}\leq
\]
\[
e^{\frac{\pi}{2}\sqrt{\sigma}|2k+1|}\int_{0}^{\pi}~d\vartheta\left(\tan{\frac{\vartheta}{2}}\right)^{2k+1}=
e^{\frac{\pi}{2}\sqrt{\sigma}|2k+1|}\left[\int^{\pi/2}_0\left(\cot{\frac{\vartheta}{2}}\right)^{|2k+1|}+
\int_{\pi/2}^{\pi}~d\vartheta\left(\cot{\frac{\vartheta}{2}}\right)^{|2k+1|}\right]\leq
\]
\[
e^{\frac{\pi}{2}\sqrt{\sigma}|2k+1|}\left[\int^{\pi/2}_0\left(\cot{\frac{\vartheta}{2}}\right)^{|2k+1|}+
\int_{\pi/2}^{\pi}~d\vartheta\left(\cot{\frac{\vartheta}{2}}\right)^{|2k+1|}\right]\leq
e^{\frac{\pi}{2}\sqrt{\sigma}|2k+1|}\left[\int^{\pi/2}_0\left(\sin{\frac{\vartheta}{2}}\right)^{-|2k+1|}+
+\frac{\pi}{2}\right]\leq
\]
\[
e^{\frac{\pi}{2}\sqrt{\sigma}|2k+1|}\left[\pi^{|2k+1|}\int^{\pi/2}_0 d\vartheta\vartheta^{-|2k+1|}+\frac{\pi}{2}\right]
=\frac{\pi}{2}e^{\frac{\pi}{2}\sqrt{\sigma}|2k+1|}\left(1+\frac{2^{|2k+1|}}{1-|2k+1|}\right)<+\infty.
\]
This completes the proof.~~$\square$
\end{proof}
In order to find an explicit representation for the domain of the operator $A_u$ we introduce the so-called maximal operator associated to $\mathcal{U}_u$ by $D(A^{max}_u)=\{\Phi\in L^2(I,dx)^2~|~\Phi~\mbox{is absolutely continuous},~\mathcal{U}_u\Phi\in L^2(I,dx),~A^{max}_u\Phi:=\mathcal{U}_u\Phi\}$. Then, by Theorem~3.9 in \cite{Weidmann2} it follows that the adjoint $A^{*}_u=A^{max}_u$. Since for $k\in\mathbb{R}\backslash(-1,0)$ the operator $A_u$ is self-adjoint by Theorem~\ref{THMI}, we also have that $A_u=A_u^{max}$.
\begin{theorem}\label{IV3}
The angular operator $A\widetilde{S}:=\mathcal{U}\widetilde{S}$ with $\mathcal{U}$ given by (\ref{exprU}) and domain of definition
\[
D(A)=\{\widetilde{S}\in L^2(I,dx)^2~|~\widetilde{S}~\mbox{is absolutely continuous},~\mathcal{U}\widetilde{S}\in L^2(I,dx)^2\}
\]
is self-adjoint if and only if $k\in\mathbb{R}\backslash(-1,0)$. In this case, $A$ is the closure of the minimal operator $A^{min}$ defined by $D(A^{min})=C_0^\infty(I)^2$ and $A^{min}\widetilde{S}:=\mathcal{U}\widetilde{S}$.
\end{theorem}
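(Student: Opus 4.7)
The plan is to reduce Theorem~\ref{IV3} to the analysis of the unbounded operator $A_u$ already carried out in Theorem~\ref{THMI} by viewing $\mathcal{U}_b$ as a bounded self-adjoint perturbation. First I would verify that $\mathcal{U}_b$ is a bounded symmetric multiplication operator on $L^2(I,dx)^2$. Since $\Delta_\vartheta=1+\sigma\cos^2\vartheta\in[1,1+\sigma]$ is bounded away from zero, the entry $\Xi a\omega\sin\vartheta(x)/\sqrt{\Delta_\vartheta(\vartheta(x))}$ is uniformly bounded on $I$, while $am_e\cos\vartheta(x)$ is trivially bounded; the matrix being real symmetric pointwise, $\mathcal{U}_b$ is an everywhere-defined bounded self-adjoint operator on $L^2(I,dx)^2$.

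Next I would observe that, since $\mathcal{U}_b$ has full domain, for any $\widetilde{S}\in L^2(I,dx)^2$ one has $\mathcal{U}\widetilde{S}\in L^2(I,dx)^2$ if and only if $\mathcal{U}_u\widetilde{S}\in L^2(I,dx)^2$, so the domain $D(A)$ coincides with $D(A_u^{max})$. By the remarks preceding the theorem one has $A_u^{max}=A_u^{*}$, and for $k\in\mathbb{R}\backslash(-1,0)$ Theorem~\ref{THMI} gives $A_u=A_u^{max}$, so that $A=A_u+\mathcal{U}_b$ with $\mathcal{U}_b$ bounded and self-adjoint. The elementary fact that a self-adjoint operator plus an everywhere-defined bounded self-adjoint operator is again self-adjoint on the same domain (a trivial case of the Kato--Rellich theorem) then shows that $A$ is self-adjoint exactly when $A_u$ is, that is, if and only if $k\in\mathbb{R}\backslash(-1,0)$. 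For the converse I would use the same identity: were $A=A_u^{max}+\mathcal{U}_b$ self-adjoint for some $k\in(-1,0)$, subtracting the bounded self-adjoint $\mathcal{U}_b$ would make $A_u^{max}$ self-adjoint, contradicting Theorem~\ref{THMI}.

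For the closure assertion in the self-adjoint case, the minimal operator splits as $A^{min}=A_u^{min}+\mathcal{U}_b|_{C_0^\infty(I)^2}$, and since $\mathcal{U}_b$ is bounded and everywhere defined its addition commutes with operator closure; concretely, the graph of $A_u^{min}+\mathcal{U}_b|_{C_0^\infty(I)^2}$ closes to that of $A_u+\mathcal{U}_b=A$, hence $\overline{A^{min}}=A$. I expect no genuine obstacle here: the only bookkeeping is the uniform boundedness of $\mathcal{U}_b$ on $I$, which is immediate because $\Delta_\vartheta\geq 1$ and $|\cos\vartheta|,|\sin\vartheta|\leq 1$, so the substantive Weyl limit-point/limit-circle work done for Theorem~\ref{THMI} carries over without modification.
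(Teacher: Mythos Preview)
Your proposal is correct and follows essentially the same route as the paper: observe that $D(A)=D(A_u^{max})$, note that $\mathcal{U}_b$ is a bounded symmetric multiplication operator, and then invoke stability of self-adjointness under bounded symmetric perturbations (the paper cites Kato, Ch.~V, Theorem~4.10, where you invoke the trivial case of Kato--Rellich) to transfer the result of Theorem~\ref{THMI}. Your treatment is slightly more explicit in checking the boundedness of the entries of $\mathcal{U}_b$ and in spelling out the closure assertion via $\overline{A_u^{min}+\mathcal{U}_b}=\overline{A_u^{min}}+\mathcal{U}_b$, but the argument is the same.
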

\begin{proof}
Let us start by observing that $D(A)=D(A^{max}_u)$. Let $A_b$ be the maximal operator associated with the formal multiplication operator $\mathcal{U}_b$, that is $D(A_b)=L^2(I,dx)^2$ and $A_b\Psi=\mathcal{U}_b\Psi$. The operator $A_b$ is symmetric and bounded in the Hilbert space $L^2(I,dx)^2$. Hence, Theorem~$4.10$ (see Ch. V in \cite{Kato}) shows that $A=A_u+A_b$ with domain $D(A)=D(A_u)$ is self-adjoint if and only if $A_u$ is self-adjoint. The result follows from Theorem~\ref{THMI}.~~$\square$
\end{proof}
Since $A$ is self-adjoint its spectrum $\sigma(A)$ must be real. 
\begin{remark}
Consider now the angular operator in the special case $m_e=0$ and introduce the formal differential expressions
\[
\mathfrak{B}=\left(
\begin{array}{cc}
0 & \mathfrak{B}_{+}\\
\mathfrak{B}_{-} & 0
\end{array}
\right)=\left(
\begin{array}{cc}
0 & \frac{d}{dx}-\frac{\Xi}{\sqrt{\Delta_x}}\left[a\omega\sin{\vartheta(x)}+\frac{\widehat{k}}{\sin{\vartheta(x)}}\right]\\
-\frac{d}{dx}-\left[a\omega\sin{\vartheta(x)}+\frac{\widehat{k}}{\sin{\vartheta(x)}}\right] & 0
\end{array}
\right).
\]
Then, Theorem~\ref{IV3} implies that for $k\in\mathbb{R}\backslash(-1,0)$ the operator $\mathcal{B}\widetilde{S}=\mathfrak{B}\widetilde{S}$ with domain of definition $D(\mathcal{B})=\{\widetilde{S}\in L^2(I,dx)^2~|~\widetilde{S}~\mbox{absolutely continuous},~\mathfrak{B}\widetilde{S}\in L^2(I,dx)^2\}$ is self-adjoint and it is the closure of the minimal operator $\mathcal{B}^{min}$ given by $D(\mathcal{B}^{min})=C_0^\infty(I)^2$ with $\mathcal{B}^{min}\widetilde{S}=\mathfrak{B}\widetilde{S}$. This implies that the operators $B\widetilde{S}_2=\mathfrak{B}_{+}\widetilde{S}_2$ with $D(B)=\{\widetilde{S}_2\in L^2(I,dx)~|~\widetilde{S}_2~\mbox{absolutely continuous},~\mathfrak{B}_{+}\widetilde{S}_2\in L^2(I,dx)\}$ and $B_{-}\widetilde{S}_1=\mathfrak{B}_{-}\widetilde{S}_1$ with $D(B_{-})=\{\widetilde{S}_1\in L^2(I,dx)~|~\widetilde{S}_1~\mbox{absolutely continuous},~\mathfrak{B}_{-}\widetilde{S}_1\in L^2(I,dx)\}$ are adjoint to each other so that $\mathcal{B}=\begin{pmatrix}
0 & B\\
B^{*} &0 \\
\end{pmatrix}$. 
\end{remark}
Let us write the angular operator $A$ as $A=\begin{pmatrix}
D & B\\
B^{*} &D \\
\end{pmatrix}$ with $D=am_e\cos{\vartheta(x)}$ a bounded multiplication operator in $L^2(I,dx)^2$ and 
\[
B=\frac{d}{dx}-\frac{\Xi}{\sqrt{\Delta_x}}\left[a\omega\sin{\vartheta(x)}+\frac{\widehat{k}}{\sin{\vartheta(x)}}\right].
\]
By using an off-diagonalization method as in \cite{Wink} we show that $A$ has compact resolvent. This together with Theorem~$6.29$, Ch.III in \cite{Kato} will imply that the spectrum of $A$ consists only of isolated eigenvalues with no accumulation points $(-\infty,\infty)$. First, we prove that the discrete spectrum of $B$ and $B^{*}$ is empty.
\begin{lemma}\label{319}
$\sigma_p(B)=\sigma_p(B^{*})=\emptyset$.
\end{lemma}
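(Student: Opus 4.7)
The plan is to exploit that $B\widetilde{S}_2=\mu\widetilde{S}_2$ and $B^{*}\widetilde{S}_1=\mu\widetilde{S}_1$ are first-order linear ODEs whose solution spaces are one-dimensional, and to rule out $L^{2}(I,dx)$ membership of the unique (up to scalar) solution for every $\mu\in\mathbb{C}$ by an asymptotic analysis at the endpoints $x=0$ and $x=x_{2}$. By the integrating factor formula, every solution of $B\widetilde{S}_2=\mu\widetilde{S}_2$ is a constant multiple of
\[
\widetilde{S}_2(x)=\exp\!\left(\mu x+\int_{x_0}^{x}\frac{\Xi}{\sqrt{\Delta_x(x')}}\left[a\omega\sin\vartheta(x')+\frac{\widehat{k}}{\sin\vartheta(x')}\right]dx'\right),
\]
so, up to nonzero bounded factors coming from $\mu x$ and from the manifestly bounded term $\Xi a\omega\sin\vartheta/\sqrt{\Delta_x}$, the endpoint asymptotics is controlled entirely by the integral of the singular piece $\Xi\widehat{k}/(\sqrt{\Delta_x}\sin\vartheta)$.

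Using $\Delta_{\vartheta}(0)=\Delta_{\vartheta}(\pi)=\Xi$ together with $dx/d\vartheta=1/\sqrt{\Delta_\vartheta}$, I would read off $x\sim\vartheta/\sqrt{\Xi}$ as $\vartheta\to 0$ and $x_{2}-x\sim(\pi-\vartheta)/\sqrt{\Xi}$ as $\vartheta\to\pi$, reducing the singular integrand to $\widehat{k}/x$ near $0$ and to $\widehat{k}/(x_{2}-x)$ near $x_{2}$. Exponentiating then gives the leading behaviour $\widetilde{S}_2(x)\sim c\,x^{\widehat{k}}$ as $x\to 0^{+}$ and $\widetilde{S}_2(x)\sim c'(x_{2}-x)^{-\widehat{k}}$ as $x\to x_{2}^{-}$. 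Square integrability at the two ends would therefore require simultaneously $\widehat{k}>-1/2$ and $\widehat{k}<1/2$, i.e.\ $k\in(-1,0)$, which is precisely the interval excluded by the standing hypothesis $k\in\mathbb{R}\backslash(-1,0)$. Hence for $k\geq 0$ the solution blows up non-integrably at $x_{2}$, and for $k\leq-1$ it blows up at $0$, so no $\mu\in\mathbb{C}$ admits an $L^{2}$ eigenvector and $\sigma_{p}(B)=\emptyset$. For $B^{*}$ the sign of the exponent in the integrating factor is reversed, which interchanges the two power laws into $x^{-\widehat{k}}$ and $(x_{2}-x)^{\widehat{k}}$; the same interval $(-1,0)$ of $k$ is excluded, and $\sigma_{p}(B^{*})=\emptyset$ by an identical argument.

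The main obstacle is not conceptual but the bookkeeping of making the endpoint asymptotics rigorous: for each endpoint one must split the integral of $\Xi\widehat{k}/(\sqrt{\Delta_x}\sin\vartheta)$ into a closed-form singular part that produces the pure power and a remainder that is bounded on $\overline{I}$ (and hence exponentiates to a nonzero bounded multiplicative factor), so that the $L^{2}$ test is genuinely governed by the explicit power law $x^{\pm\widehat{k}}$ and $(x_{2}-x)^{\mp\widehat{k}}$. Once this is done the $L^{2}$ comparisons $\int_0 x^{2\widehat{k}}dx<\infty\iff\widehat{k}>-1/2$ and $\int^{x_2}(x_2-x)^{-2\widehat{k}}dx<\infty\iff\widehat{k}<1/2$ close the argument.
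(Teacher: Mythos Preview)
Your proposal is correct and follows essentially the same approach as the paper: solve the first-order eigenvalue ODE explicitly and show the unique solution fails to be square integrable at one of the endpoints for every $k\in\mathbb{R}\setminus(-1,0)$. The only cosmetic difference is that the paper writes the solution in closed form as $(\tan(\vartheta(x)/2))^{\pm\widehat{k}}$ times a bounded exponential factor and then estimates the $L^2$-norm directly in the $\vartheta$-variable, whereas you extract the equivalent endpoint power laws $x^{\pm\widehat{k}}$ and $(x_2-x)^{\mp\widehat{k}}$ by asymptotic analysis in $x$; the paper's explicit formula simply does your ``bookkeeping'' step in one stroke.
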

\begin{proof}
Take any $\mu\in\mathbb{C}$. Then, $\mu\in\sigma_p(B)\cup\sigma_p(B^{*})$ if and only if at least one of the differential equations
\[
\left(-\frac{d}{dx}-\frac{\Xi\widehat{k}}{\sqrt{\Delta_x}\sin{\vartheta(x)}}-\frac{\Xi a\omega\sin{\vartheta(x)}}{\sqrt{\Delta_x}}-\mu\right)\varphi_{[\mu]}(x)=0,\quad
\left(\frac{d}{dx}-\frac{\Xi\widehat{k}}{\sqrt{\Delta_x}\sin{\vartheta(x)}}-\frac{\Xi a\omega\sin{\vartheta(x)}}{\sqrt{\Delta_x}}-\mu\right)\psi_{[\mu]}(x)=0
\]
has a square integrable solution. Let $\Xi=1+\sigma$ and recall that $\widehat{k}=k+1/2$. The solutions of these differential equations are
\[
\varphi_{[\mu]}(x)=ce^{f(x)}\left(\tan{\frac{\vartheta(x)}{2}}\right)^{-\left(k+\frac{1}{2}\right)},\quad
\psi_{[\mu]}(x)=ce^{-f(x)}\left(\tan{\frac{\vartheta(x)}{2}}\right)^{k+\frac{1}{2}}
\]
with
\[
f(x)=-\mu x+\left(k+\frac{1}{2}\right)\sqrt{\sigma}\arctan{(\sqrt{\sigma}\cos{\vartheta(x)})}+a\omega(1+\sigma)\frac{\arctan{(\sqrt{\sigma}\cos{\vartheta(x)})}}{\sqrt{\sigma}}.
\]
The functions $\varphi_{[\mu]}$ and $\psi_{[\mu]}$ are defined up to a multiplicative constant $c\in\mathbb{C}$. Without loss of generality we set $c=1$. Let us show that this functions are not square integrable on the interval $I=(0,x_2)$. We start by observing that
\[
\langle\varphi_{[\mu]},\varphi_{[\mu]}\rangle=\int_I\varphi_{[\mu]}(x)\overline{\varphi_{[\mu]}(x)}~dx=
\int_0^\pi\varphi_{[\mu]}(\vartheta)\overline{\varphi_{[\mu]}(\vartheta)}~\frac{d\vartheta}{\sqrt{\Delta_\vartheta}}=
\int_0^\pi e^{\rho(\vartheta)}\left(\tan{\frac{\vartheta}{2}}\right)^{-(2k+1)}~\frac{d\vartheta}{\sqrt{\Delta_\vartheta}}
\]
with
\[
\rho(\vartheta)=-2(\Re{\mu})\vartheta+(2k+1)\sqrt{\sigma}\arctan{(\sqrt{\sigma}\cos{\vartheta})}+2a\omega(1+\sigma)\frac{\arctan{(\sqrt{\sigma}\cos{\vartheta})}}{\sqrt{\sigma}}.
\]
Taking into account that $1\leq\sqrt{\Delta_\vartheta}\leq\sqrt{1+\sigma}$ for $\vartheta\in[0,\pi]$ and that the function $\arctan{(\sqrt{\sigma}\cos{\vartheta})}$ is continuous and decreasing on $[0,\pi]$ let $M=\inf_{\vartheta\in[0,\pi]}\{e^{\rho(\vartheta)}\}>0$. Then, 
\[
\langle\varphi_{[\mu]},\varphi_{[\mu]}\rangle>\frac{M}{\sqrt{1+\sigma}}\left[\int_0^{\pi/2}\left(\tan{\frac{\vartheta}{2}}\right)^{-(2k+1)}~d\vartheta+\int_{\pi/2}^{\pi}\left(\tan{\frac{\vartheta}{2}}\right)^{-(2k+1)}~d\vartheta\right].
\]
For $k\geq 0$ we have $\tan{(\vartheta/2)}\leq 1$ when $\vartheta\in(0,\pi/2)$ and therefore
\[
\int_0^{\pi/2}\left(\tan{\frac{\vartheta}{2}}\right)^{-(2k+1)}~d\vartheta\geq
\int_0^{\pi/2}\left(\tan{\frac{\vartheta}{2}}\right)^{-1}~d\vartheta=\left.2\ln{\sin{\frac{\vartheta}{2}}}\right|_0^{\pi/2}=+\infty.
\]
When $k\leq -1$ we have $\tan{(\vartheta/2)}\geq 1$ on $(\pi/2,\pi)$ and hence
\[
\int_{\pi/2}^{\pi}\left(\tan{\frac{\vartheta}{2}}\right)^{-(2k+1)}~d\vartheta\geq
\int_{\pi/2}^\pi\tan{\frac{\vartheta}{2}}~d\vartheta=\left.-2\ln{\cos{\frac{\vartheta}{2}}}\right|_{\pi/2}^\pi=+\infty.
\]
In any case it results $\varphi_{[\mu]}\notin L^2(I,dx)$. Similarly, it can be shown that $\psi_{[\mu]}\notin L^2(I,dx)$.~~$\square$
\end{proof}
For $\mu\in\mathbb{C}$ we introduce the formal differential expression defined by $\mathfrak{B}_\mu=\begin{pmatrix}
0 & \mathfrak{B}_{+}-\mu\\
\mathfrak{B}_{-}-\overline{\mu} &0 \\
\end{pmatrix}$ and we associate to $\mathfrak{B}_\mu$ the differential operator $\mathcal{B}_\mu\widetilde{S}=\mathfrak{B}_\mu\widetilde{S}$ with $D(\mathcal{B}_\mu)=D(A)$. Moreover, in the notation of the previous remark we have $\mathcal{B}=\mathcal{B}_0=\begin{pmatrix}
0 & B\\
B^{*} &0 \\
\end{pmatrix}$. The operator $\mathcal{B}_\mu$ is self-adjoint for any $\mu\in\mathbb{C}$ because $A$ is self-adjoint and $\mathcal{B}_\mu-A$ is symmetric and bounded. Following an analogous proof to that of Theorem~$2.14$ in \cite{Wink} it can be shown that $\sigma_{ess}(\mathcal{B}_\mu)=\emptyset$. This fact together with the next result implies that $\mathcal{B}_\mu$ is boundedly invertible.
\begin{lemma}\label{pl}
For all $\mu\in\mathbb{C}$ we have $0\notin\sigma_p(\mathcal{B}_\mu)$.
\end{lemma}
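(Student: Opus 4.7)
The plan is to reduce the claim to Lemma~\ref{319} by exploiting the off-diagonal block structure of $\mathfrak{B}_\mu$. Suppose $\widetilde{S}=(\widetilde{S}_1,\widetilde{S}_2)^T\in D(\mathcal{B}_\mu)=D(A)$ satisfies $\mathcal{B}_\mu\widetilde{S}=0$. Because $\mathfrak{B}_\mu$ has zeros on the diagonal, this equation decouples into the two independent first-order ODEs
\[
(\mathfrak{B}_{+}-\mu)\widetilde{S}_2=0,\qquad (\mathfrak{B}_{-}-\overline{\mu})\widetilde{S}_1=0,
\]
each posed on the interval $I=(0,x_2)$.

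Next I would identify the solution spaces. Each equation is a homogeneous linear ODE of the first order, hence its solution space is one-dimensional. A direct comparison with the two ODEs treated in the proof of Lemma~\ref{319} shows that, up to a multiplicative constant, the only solutions are $\widetilde{S}_2=c_2\,\psi_{[\mu]}$ and $\widetilde{S}_1=c_1\,\varphi_{[\overline{\mu}]}$, where $\psi_{[\mu]}$ and $\varphi_{[\overline{\mu}]}$ are the explicit exponential-times-tangent-power expressions written down in Lemma~\ref{319}.

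Now I would invoke Lemma~\ref{319} itself. That lemma proved that for \emph{every} $\mu\in\mathbb{C}$ the functions $\varphi_{[\mu]}$ and $\psi_{[\mu]}$ fail to belong to $L^2(I,dx)$; the argument depended only on $\Re\mu$ and on the blow-up of $(\tan(\vartheta/2))^{\mp(2k+1)}$ at the endpoints $\vartheta=0,\pi$, both of which are unaffected by replacing $\mu$ by $\overline{\mu}$. Consequently the only elements of $D(\mathcal{B}_\mu)\subset L^2(I,dx)^2$ compatible with the decoupled ODEs are $\widetilde{S}_1=\widetilde{S}_2=0$, which yields $\ker\mathcal{B}_\mu=\{0\}$, i.e.\ $0\notin\sigma_p(\mathcal{B}_\mu)$.

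The main work has already been done in Lemma~\ref{319}; the only real obstacle here is clerical, namely checking that the ODE satisfied by $\widetilde{S}_1$ matches the one whose solution is $\varphi_{[\overline{\mu}]}$ (rather than $\varphi_{[\mu]}$) once one writes $\mathfrak{B}_{-}-\overline{\mu}$ explicitly. Since the non-integrability argument in Lemma~\ref{319} is stable under bounded additive perturbations of the coefficient (it only uses the leading $\tan(\vartheta/2)$ singularities), this bookkeeping point causes no difficulty, and the proof is complete in a few lines.
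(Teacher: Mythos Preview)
Your proof is correct and follows essentially the same approach as the paper: both arguments exploit the off-diagonal structure of $\mathcal{B}_\mu$ to reduce the kernel equation to the two decoupled first-order ODEs $(\mathfrak{B}_+-\mu)\widetilde{S}_2=0$ and $(\mathfrak{B}_--\overline{\mu})\widetilde{S}_1=0$, and then invoke Lemma~\ref{319} (i.e.\ $\sigma_p(B)=\sigma_p(B^*)=\emptyset$) to conclude. The only cosmetic difference is that the paper phrases the decoupling via the matrix identity $\mathcal{B}_\mu\bigl(\begin{smallmatrix}0&I\\I&0\end{smallmatrix}\bigr)=\bigl(\begin{smallmatrix}B-\mu&0\\0&B^*-\overline{\mu}\end{smallmatrix}\bigr)$, whereas you read off the component equations directly.
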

\begin{proof}
We prove it by contradiction. Suppose $0\in\sigma_p(\mathcal{B}_\mu)$ and $\widetilde{S}$ be an eigenfunction of $\mathcal{B}_\mu$ with  eigenvalue $0$. Then,
\[
0=\mathcal{B}_\mu\widetilde{S}=\mathcal{B}_\mu\begin{pmatrix}
0 & I\\
I &0 \\
\end{pmatrix}\begin{pmatrix}
0 & I\\
I &0 \\
\end{pmatrix}\widetilde{S}=\begin{pmatrix}
B-\mu & 0\\
0 &B^{*}-\overline{\mu} \\
\end{pmatrix}\begin{pmatrix}
0 & I\\
I &0 \\
\end{pmatrix}\widetilde{S}
\]
implies that either $B-\mu$ or $B^{*}-\overline{\mu}$ is not injective which is in contradiction to $\sigma_p(B)\cup\sigma_p(B^{*})=\emptyset$ following from the previous lemma.~~$\square$
\end{proof}
We now derive an auxiliary result that will allow us to prove that the operator $\mathcal{B}$ has compact resolvent from which it will follow that the angular operator has compact resolvent  as well. According to Lemma~\ref{pl} we have $0\in\mathbb{C}\backslash(\sigma_p(\mathcal{B}_\mu)\cup\sigma_{ess}(\mathcal{B}_\mu))=\rho(\mathcal{B}_\mu)$, where $\rho(\mathcal{B}_\mu)$ denotes the the resolvent set of $\mathcal{B}_\mu$. Hence, $B-\mu$ and $B^{*}-\overline{\mu}$ are boundedly invertible. Moreover, their resolvents and the resolvent of $\mathcal{B}_\mu$ are connected as follows
\[
\begin{pmatrix}
(B-\mu)^{-1} & 0\\
0 & (B^{*}-\overline{\mu})^{-1} \\
\end{pmatrix}=\begin{pmatrix}
B-\mu & 0\\
0 & B^{*}-\overline{\mu} \\
\end{pmatrix}^{-1}=\begin{pmatrix}
0 & I\\
I &0 \\
\end{pmatrix}\mathcal{B}_\mu^{-1}.
\]
In particular, we have that the ranges of $B-\mu$ and $B^{*}-\overline{\mu}$ are such that $\mbox{rg}(B-\mu)=\mbox{rg}(B^{*}-\overline{\mu})=L^2(I,dx)$. Hence, we have shown that $\sigma(B)=\sigma(B^{*})=\emptyset$.
\begin{lemma}\label{321}
Let $\mu\in\mathbb{C}$ and $\varphi_{[\mu]}$ and $\psi_{[\mu]}$ be defined as in the proof of Lemma~\ref{319}. Then, the operators $(B-\mu)^{-1}$ and $(B^{*}-\overline{\mu})^{-1}$ map functions $g,h\in L^2(I,dx)$ to
\begin{equation}\label{343a}
(B-\mu)^{-1}g(x)=\frac{1}{\varphi_{[\mu]}(x)}\left\{ \begin{array}{ll}
         \int_x^{x_2}\varphi_{[\mu]}(t)g(t)~dt & \mbox{if}~ k \geq 0\vspace{0.2cm}\\
         \int_{x}^0\varphi_{[\mu]}(t)g(t)~dt & \mbox{if}~ k \leq -1\end{array}\right.,\quad x\in I
\end{equation}
and
\begin{equation}\label{343b}
(B^{*}-\overline{\mu})^{-1}h(x)=\frac{1}{\psi_{[\mu]}(x)}\left\{ \begin{array}{ll}
         \int_0^{x}\psi_{[\mu]}(t)h(t)~dt & \mbox{if}~ k \geq 0\vspace{0.2cm}\\
         \int_{x_2}^x\psi_{[\mu]}(t)h(t)~dt & \mbox{if}~ k \leq -1\end{array}\right.,\quad x\in I.
\end{equation}
\end{lemma}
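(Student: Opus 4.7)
The plan is to reduce the lemma to variation of parameters for the two underlying first order ODEs, followed by an $L^{2}$ regularity check that pins down the base point of integration. From the proof of Lemma~\ref{319} we already know that $\psi_{[\mu]}$ is a formal null function of $B-\mu$ and that $\varphi_{[\mu]}=1/\psi_{[\mu]}$ is a formal null function of $B^{*}-\overline{\mu}$ (since all coefficients other than the $\mu$-shift are real). Treating $\varphi_{[\mu]}$ as an integrating factor, every classical solution of $(B-\mu)f=g$ on $I$ has the form
\[
f(x)=\frac{1}{\varphi_{[\mu]}(x)}\left(C+\int_{c}^{x}\varphi_{[\mu]}(t)\,g(t)\,dt\right),
\]
parametrised by a base point $c\in\{0,x_{2}\}$ and a constant $C\in\mathbb{C}$, as a one-line differentiation confirms. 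The content of (\ref{343a}) is therefore the assertion that there is a unique pair $(c,C)$ making $f\in L^{2}(I,dx)$, and an analogous claim produces (\ref{343b}).

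To fix the pair $(c,C)$ I would reuse the pointwise asymptotics of $\varphi_{[\mu]}$ near the endpoints already carried out in the proof of Theorem~\ref{THMI}. For $k\geq 0$, the factor $(\tan(\vartheta/2))^{-(k+1/2)}$ makes $\varphi_{[\mu]}$ blow up at $x=0$ and vanish at $x=x_{2}$, so $\varphi_{[\mu]}^{-1}$ vanishes at $0$ and blows up at $x_{2}$. Square integrability of $f$ near $x_{2}$ forces the bracket in the formula above to vanish as $x\to x_{2}^{-}$, which uniquely selects $c=x_{2}$ and $C=0$ and delivers the first line of (\ref{343a}); square integrability near $0$ is then automatic because $\varphi_{[\mu]}^{-1}$ decays there. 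For $k\leq -1$ the roles of the two endpoints interchange, producing the second line. The formula (\ref{343b}) is obtained by the same bookkeeping using the reciprocal identity $\psi_{[\mu]}=1/\varphi_{[\mu]}$ with the orientation of the integral reversed.

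The step I expect to require the most care is confirming that the resulting Volterra operators actually send $L^{2}(I,dx)$ into itself, because neither $\varphi_{[\mu]}$ nor $\varphi_{[\mu]}^{-1}$ is bounded on $I$. Writing the kernel through the explicit product $\varphi_{[\mu]}(x)\psi_{[\mu]}(t)$, using the boundedness of the factor $e^{\pm\sqrt{\sigma}(k+1/2)\arctan(\sqrt{\sigma}\cos\vartheta)}$ on $[0,\pi]$, the change of variable $dx=d\vartheta/\sqrt{\Delta_{\vartheta}}$, and the bound $1\leq\sqrt{\Delta_{\vartheta}}\leq\sqrt{\Xi}$, the required $L^{2}$ estimate reduces to a weighted Hardy inequality on $(0,\pi)$ whose weight has integrable singularities of order $|2k+1|\geq 1$; this is exactly the window $k\in\mathbb{Z}\setminus(-1,0)$ isolated by Theorem~\ref{THMI}. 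Once this estimate is in hand, absolute continuity of $f$ is automatic from the Volterra form, direct differentiation confirms $(B-\mu)f=g$, and uniqueness follows because $B-\mu$ has already been shown to be boundedly invertible just above the statement of the lemma; an identical argument closes the case of $(B^{*}-\overline{\mu})^{-1}$.
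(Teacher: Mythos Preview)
Your proposal is correct and follows essentially the same route as the paper: obtain the formula by variation of parameters with $\varphi_{[\mu]}$ as integrating factor, verify it formally, and then check $L^{2}$-membership. The one point worth noting is that the $L^{2}$ step is simpler than you anticipate. You propose to invoke a weighted Hardy inequality, but in fact the Volterra kernel $\varphi_{[\mu]}(t)/\varphi_{[\mu]}(x)$ is \emph{uniformly bounded} on the integration region $\{x<t\}$ when $k\geq 0$ (respectively $\{t<x\}$ when $k\leq -1$): after passing to the variable $\vartheta$ and stripping the bounded exponential factors, the ratio is $(\tan(\vartheta/2)/\tan(\tau/2))^{2k+1}\leq 1$ for $\vartheta<\tau$. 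The paper establishes this via the elementary monotonicity of $\vartheta\mapsto\tan(\vartheta/2)/\vartheta$, and then a direct Cauchy--Schwarz bound on the inner integral gives $\|G\|_{L^{2}}\leq C\|g\|_{L^{2}}$ with no appeal to Hardy. Your Hardy argument would also work, but the correct choice of base point (which you identify) is precisely what makes the kernel bounded rather than singular, so the heavier machinery is unnecessary.
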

\begin{proof}
From the proof Lemma~\ref{319} it follows that $\psi_{[\mu]}$ is a solution of $(B-\mu)u=0$ and that $\varphi_{[\mu]}$ is a solution of $(B^{*}-\overline{\mu})v=0$. To prove (\ref{343a}) and (\ref{343b}) we first show that $(B-\mu)G(x)=g(x)$ and $(B^{*}-\overline{\mu})H(x)=h(x)$ with $x\in I$ hold formally, where $G$ and $H$ denote the r.h.s. of (\ref{343a}) and (\ref{343b}), respectively. Let us start by assuming $k\geq 0$. Then, for $g\in\mbox{rg}(B-\mu)$ we find
\begin{eqnarray*}
(B-\mu)G(x)&=&(B-\mu)\left(\frac{1}{\varphi_{[\mu]}(x)}\int_x^{x_2}\varphi_{[\mu]}(t)g(t)~dt\right)
=(B-\mu)\left(\psi_{[\mu]}(x)\int_x^{x_2}\varphi_{[\mu]}(t)g(t)~dt\right),\\
&=&\left(B\psi_{[\mu]}(x)-\mu\psi_{[\mu]}(x)\right)\int_0^x\varphi_{[\mu]}(t)g(t)~dt+\psi_{[\mu]}(x)\frac{d}{dx}\int_0^x\varphi_{[\mu]}(t)g(t)~dt=\psi_{[\mu]}(x)\varphi_{[\mu]}(x)g(x)=g(x),
\end{eqnarray*}
where we used $\psi_{[\mu]}(x)\varphi_{[\mu]}(x)=1$ and $(\mathfrak{B}_{+}-\mu)\psi_{[\mu]}=0$. The case $k\leq -1$ and the equation for $h$ can be shown in a similar way. Let us prove that $G\in D(B)$ and $H\in D(B^{*})$. We will do it only for $G\in D(B)$ in the case $k\geq 0$ because the statement for $k\leq -1$ and $H\in D(B^{*})$ can be proved analogously. Since $D(B)=\{g\in L^2(I,dx)~|~g~\mbox{absolutely continuous},~\mathfrak{B}_{+}g\in L^2(I,dx)\}$, we need to verify that $G\in L^2(I,dx)$. Let $k\geq 0$. Then,
\begin{eqnarray}
\|G\|^2_{L^2}&=&\int_0^{x_2}|G(x)|^2~dx=\int_0^{x_2}\frac{1}{|\varphi_{[\mu]}(x)|^2}\left|\int_x^{x_2}\varphi_{[\mu]}(t)g(t)~dt\right|^2 dx\leq
\int_0^{x_2}\left(\int_x^{x_2}\left|\frac{\varphi_{[\mu]}(t)}{\varphi_{[\mu]}(x)}\right||g(t)| dt\right)^2 dx,\notag\\
&=&\int_0^{\pi}\left(\int_\vartheta^\pi\left|\frac{\varphi_{[\mu]}(\tau)}{\varphi_{[\mu]}(\vartheta)}\right||g(\tau)|\frac{d\tau}{\sqrt{\Delta_{\tau}}}\right)^2\frac{d\vartheta}{\sqrt{\Delta_\vartheta}}<\int_0^{\pi}\left(\int_\vartheta^\pi\left|\frac{\varphi_{[\mu]}(\tau)}{\varphi_{[\mu]}(\vartheta)}\right||g(\tau)| d\tau\right)^2 d\vartheta\label{344}
\end{eqnarray}
with $0<\vartheta<\tau<\pi$. Note that by assumption $g\in L^2((0,\pi),d\vartheta/\sqrt{\Delta_\vartheta})$ and its restriction on the interval $(0,\vartheta)$ will belong to $L^2((0,\vartheta),d\tau/\sqrt{\Delta_\tau})$ for $\vartheta\in(0,\pi)$. Furthermore, we have
\[
\left|\frac{\varphi_{[\mu]}(\tau)}{\varphi_{[\mu]}(\vartheta)}\right|^2=e^{-2(\Re{\mu})(\tau-\vartheta)+\chi(\tau,\vartheta)}
\left(\frac{\tan{\frac{\vartheta}{2}}}{\tan{\frac{\tau}{2}}}\right)^{2k+1}
\]
with
\[
\chi(\tau,\vartheta)=2\left[\widehat{k}\sqrt{\sigma}+a\omega\frac{1+\sigma}{\sqrt{\sigma}}\right]
\left[\arctan{(\sqrt{\sigma}\cos{\tau})}-\arctan{(\sqrt{\sigma}\cos{\vartheta})}\right].
\]
First of all observe that $e^{-2(\Re{\mu})(\tau-\vartheta)}=e^{2(\Re{\mu})\vartheta}e^{-2(\Re{\mu})\tau}\leq Ke^{2|\Re{\mu}|\vartheta}$ with 
$K=\max{\{1,e^{2|\Re{\mu}|\pi}\}}$. Furthermore, the function $\chi(\tau,\vartheta)$ has an absolute maximum on the square $[0,\pi]\times[0,\pi]$ at the point $(0,\pi)$ where $\chi(0,\pi)=2\arctan{\sqrt{\sigma}}$. Hence, there exists a positive constant $C$ such that $e^{\chi(\tau,\vartheta)}\leq C$ with $C=e^{4\left(|\widehat{k}|\sqrt{\sigma}+|a\omega|\frac{1+\sigma}{\sqrt{\sigma}}\right)\arctan{\sqrt{\sigma}}}$. Finally, since $0<\vartheta<\tau<\pi$ we can show that 
\[
\frac{\tan{\frac{\vartheta}{2}}}{\tan{\frac{\tau}{2}}}<\frac{\vartheta}{\tau}
\]
as follows. Since the above inequality is equivalent to the inequality $\frac{\tan{\frac{\vartheta}{2}}}{\vartheta}<\frac{\tan{\frac{\tau}{2}}}{\tau}$, we consider the function $f:(0,\pi)\longrightarrow\mathbb{R}$ such that $f(\vartheta)=\frac{\tan{\frac{\vartheta}{2}}}{\vartheta}$. Clearly, $f$ is continuously differentiable and the inequality we need to prove is equivalent to $f(\vartheta)<f(\tau)$ for $0<\vartheta<\tau<\pi$. Hence, it suffices to show that $f$ is a monotonously increasing function of $\vartheta$. A straightforward computation shows that
\[
f^{'}(\vartheta)=\frac{1}{2\vartheta\cos^2{\frac{\vartheta}{2}}}\left(1-\frac{\sin{\frac{\vartheta}{2}}}{\frac{\vartheta}{2}}\cos{\frac{\vartheta}{2}}\right)>0
\]
since $(\sin{\frac{\vartheta}{2}})/(\vartheta/2)<1$ and $\cos{\frac{\vartheta}{2}}<1$. Finally, $\vartheta<\tau$ implies that 
\[
\frac{\tan{\frac{\vartheta}{2}}}{\tan{\frac{\tau}{2}}}<1
\]
and we conclude that
\begin{equation}\label{ineq}
\left|\frac{\varphi_{[\mu]}(\tau)}{\varphi_{[\mu]}(\vartheta)}\right|^2<CK e^{2|\Re{\mu}|\vartheta}.
\end{equation}
Hence, $\varphi_{[\mu]}(\tau)/\varphi_{[\mu]}(\vartheta)\in L^2((0,\pi),d\tau)$ for each fixed $\vartheta\in(0,\pi)$. Therefore, we can use the Cauchy-Schwarz inequality to estimate the inner integral in (\ref{344}) as follows
\[
\left(\int_\vartheta^\pi\left|\frac{\varphi_{[\mu]}(\tau)}{\varphi_{[\mu]}(\vartheta)}\right||g(\tau)| d\tau\right)^2\leq
\int_\vartheta^\pi\left|\frac{\varphi_{[\mu]}(\tau)}{\varphi_{[\mu]}(\vartheta)}\right|^2 d\tau\int_\vartheta^\pi|g(\tau)|^2 d\tau
<CK\int_\vartheta^\pi e^{2|\Re{\mu}|\vartheta} d\tau \int_0^\pi|g(\tau)|^2 d\tau<C_0\|g\|^2_{L^2} e^{2|\Re{\mu}|\pi}(\pi-\vartheta)
\]
with $C_0=CK$. Inserting the above expression into (\ref{344}) yields $\|G\|^2_{L^2}<(\pi^2 C_0/2)e^{2|\Re{\mu}|\pi}\|g\|^2_{L^2}<+\infty$ and this concludes the proof.~~$\square$
\end{proof}
Note that Lemma~\ref{321} gives explicit expressions for the inverses of $B$ and $B^{*}$ by choosing $\mu=0$. Now that we have obtained an explicit form of $\mathcal{B}^{-1}$ we can show that $\mathcal{B}$, and therefore the angular operator $A$, has compact resolvent. 
\begin{lemma}\label{322}
The operator $\mathcal{B}$ has compact resolvent.
\end{lemma}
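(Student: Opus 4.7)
The plan is to show that $\mathcal{B}^{-1}$, whose matrix form
\[
\mathcal{B}^{-1}=\begin{pmatrix} 0 & (B^{*})^{-1}\\ B^{-1} & 0 \end{pmatrix}
\]
is already implicit in the block identity derived just before this lemma, is a compact operator on $L^2(I,dx)^2$. For this it suffices to prove that the two scalar integral operators $B^{-1}$ and $(B^{*})^{-1}$ described explicitly by (\ref{343a}) and (\ref{343b}) (taking $\mu=0$) are Hilbert--Schmidt on $L^2(I,dx)$, since any $2\times 2$ block matrix of Hilbert--Schmidt scalar operators is itself Hilbert--Schmidt, hence compact.

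First I would write $B^{-1}$ as an integral operator with kernel
\[
K(x,t)=\frac{\varphi_{[0]}(t)}{\varphi_{[0]}(x)}\,\chi_{\{x<t\}}(x,t) \qquad (k\geq 0)
\]
(and the analogous kernel for $k\leq -1$, with $x$ and $t$ swapped in the indicator); the kernel of $(B^{*})^{-1}$ is treated symmetrically using $\psi_{[0]}=1/\varphi_{[0]}$. To control the Hilbert--Schmidt norm I would specialize the central pointwise estimate from the proof of Lemma \ref{321}, namely inequality (\ref{ineq}), to $\mu=0$, which gives
\[
\left|\frac{\varphi_{[0]}(\tau)}{\varphi_{[0]}(\vartheta)}\right|^{2}<C_{0}, \qquad 0<\vartheta<\tau<\pi,
\]
for a constant $C_{0}=CK$ depending only on $\sigma$, $k$ and $a\omega$ (the exponential factor $e^{2|\Re\mu|\vartheta}$ drops out). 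Transporting this bound to the $x$--variable via $dx=d\vartheta/\sqrt{\Delta_{\vartheta}}$, where $1/\sqrt{\Delta_{\vartheta}}$ is uniformly bounded above and below by positive constants on $[0,\pi]$, then yields $|K(x,t)|\le C_{0}^{1/2}$ pointwise on the triangular support.

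Integrating the square of the kernel over the bounded rectangle $I\times I$, whose area is at most $x_{2}^{2}$, will give
\[
\iint_{I\times I}|K(x,t)|^{2}\,dx\,dt\;\leq\; C_{0}\,\frac{x_{2}^{2}}{2}\;<\;+\infty,
\]
so $B^{-1}$ is Hilbert--Schmidt; the argument for $(B^{*})^{-1}$ is identical after interchanging the roles of the endpoints. Assembling the two scalar kernels into the off-diagonal block form above exhibits $\mathcal{B}^{-1}$ as a Hilbert--Schmidt, hence compact, operator on $L^{2}(I,dx)^{2}$, which is exactly the statement that $\mathcal{B}$ has compact resolvent.

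I do not expect any serious obstacle here; the delicate analytical work has already been carried out in Lemma \ref{321}, where the uniform estimate on the ratio $\varphi_{[\mu]}(\tau)/\varphi_{[\mu]}(\vartheta)$ was obtained by proving the monotonicity of $f(\vartheta)=\tan(\vartheta/2)/\vartheta$. The only small care point will be to note that the endpoint behaviour of $\varphi_{[0]}$ at $x=0$ and $x=x_{2}$ (where $\tan(\vartheta(x)/2)$ blows up or vanishes) is exactly compensated by the support condition in the kernel, so that $K$ remains bounded throughout its support, and that the boundedness of $I$ --- established via the elliptic integral formula for $x_{2}$ --- is what ultimately allows us to upgrade a bounded kernel to a Hilbert--Schmidt kernel.
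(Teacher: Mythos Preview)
Your argument is correct and in fact streamlines the paper's own proof. Both proofs reduce the claim to the compactness of the scalar operators $B^{-1}$ and $(B^{*})^{-1}$ and both rely on the key pointwise bound (\ref{ineq}) from Lemma~\ref{321}, specialised to $\mu=0$, which makes the integral kernel $K(x,t)=\varphi_{[0]}(t)/\varphi_{[0]}(x)\,\chi_{\{x<t\}}$ uniformly bounded on its triangular support. The difference is purely in how compactness is then concluded: the paper introduces a sequence of truncation operators $T_n$ obtained by cutting off the integration near the endpoints, shows each $T_n$ is compact because its kernel is continuous on a compact rectangle, and then proves $\|T_n-B^{-1}\|\to 0$ via the same estimate; you instead observe directly that a bounded kernel on the finite square $I\times I$ is Hilbert--Schmidt. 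Your route is shorter and yields the mildly stronger Hilbert--Schmidt conclusion, while the paper's approximation argument avoids naming the Hilbert--Schmidt class but is otherwise equivalent in content.
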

\begin{proof}
Tho show that the operator $\mathcal{B}^{-1}=\begin{pmatrix}
0 & (B^{*})^{-1}\\
B^{-1} & 0
\end{pmatrix}$ is compact, it suffices to show that the operators $B^{-1}$ and $(B^{*})^{-1}$ are compact. We limit us to prove that $B^{-1}$ is compact in the case $k\geq 0$ because the case $k\leq -1$ and the corresponding assertion regarding $(B^{*})^{-1}$ can be proved in an analogous way. From the previous lemma we have for $k\geq 0$ and $g\in L^2(I,dx)$
\[
B^{-1}g(x)=\frac{1}{\varphi_{[0]}(x)}\int_x^{x_2}\varphi_{[0]}(t)g(t) dt,\quad x\in I.
\]
For each $n\in\mathbb{N}$ we define the operators 
\[
T_n:L^2(I,dx)\longrightarrow L^2(I,dx),\quad T_n f(x)=\left\{ \begin{array}{ll}
         0 & \mbox{if}~ x\notin\left[\frac{1}{n},x_2-\frac{1}{n}\right]\vspace{0.2cm}\\
         \frac{1}{\varphi_{[0]}(x)}\int_{x+\frac{1}{n}}^{x_2}\varphi_{[0]}(t)f(t)~dt & \mbox{if}~ x\in\left[\frac{1}{n},x_2-\frac{1}{n}\right]\end{array}\right.
\]
and
\[
\widehat{T}_n:L^2\left(\left[\frac{1}{n},x_2-\frac{1}{n}\right],dx\right)\longrightarrow L^2\left(\left[\frac{1}{n},x_2-\frac{1}{n}\right],dx\right),\quad \widehat{T}_n f(x)=\frac{1}{\varphi_{[0]}(x)}\int_{x+\frac{1}{n}}^{x_2}\varphi_{[0]}(t)f(t)~dt.
\]
These operators are bounded for all $n\in\mathbb{N}$ and the operators $\widehat{T}_n$ are compact since the integral kernel is continuous and bounded (see example $4.1$, Ch. III in \cite{Kato}). For every $f\in L^2(I,dx)$ the restriction $\widehat{f}$ of $f$ on the compact interval $\left[\frac{1}{n},x_2-\frac{1}{n}\right]$ lies in $L^2\left(\left[\frac{1}{n},x_2-\frac{1}{n}\right],dx\right)$. It is clear that for any convergent sequence $(g_n)_{n\in\mathbb{N}}$ with terms in $L^2\left(\left[\frac{1}{n},x_2-\frac{1}{n}\right],dx\right)$ also the sequence $(\widetilde{g}_n)_{n\in\mathbb{N}}$ with terms in $L^2\left(\left[\frac{1}{n},x_2-\frac{1}{n}\right],dx\right)$ converges where
\[
\widetilde{g}_n(x)=\left\{ \begin{array}{ll}
         0 & \mbox{if}~ x\notin\left[\frac{1}{n},x_2-\frac{1}{n}\right]\vspace{0.2cm}\\
         g(x) & \mbox{if}~ x\in\left[\frac{1}{n},x_2-\frac{1}{n}\right]\end{array}\right.
\]
for all $n\in\mathbb{N}$. Let $(f_m)_{m\in\mathbb{N}}$ be a bounded sequence in $L^2(I,dx)$. Then, $(\widehat{f}_m)_{m\in\mathbb{N}}$ is also a bounded sequence in $L^2\left(\left[\frac{1}{n},x_2-\frac{1}{n}\right],dx\right)$. Hence, for any $n\in\mathbb{N}$ the sequence $(\widehat{T}_n f_m)_{m\in\mathbb{N}}$ contains a convergent subsequence. Therefore, also $(T_n f_m)_{m\in\mathbb{N}}$ contains a convergent subsequence since $\widehat{T}_n \widehat{f}_m=T_n f_m$. This shows that the operators $T_n$ are also compact. The proof is completed once we have shown that $T_n\to B^{-1}$ as $n\to\infty$ in the operator norm, that is we have to verify that $\|T_n-B^{-1}\|\to 0$ as $n\to\infty$. To see that, we note that for all $f\in L^2(I,dx)$ we have
\[
\|(T_n-B^{-1})f\|^2_{L^2}=\int_0^{x_2}|(T_n-B^{-1})f(x)|^2 dx=\int_0^{\frac{1}{n}}|B^{-1}f(x)|^2 dx+
\]
\[
+\int_{\frac{1}{n}}^{x_2-\frac{1}{n}}|T_n f(x)-B^{-1}f(x)|^2 dx+\int_{x_2-\frac{1}{n}}^{x_2}|B^{-1}f(x)|^2 dx=
\int_0^{\frac{1}{n}}|B^{-1}f(x)|^2 dx+\int_{x_2-\frac{1}{n}}^{x_2}|B^{-1}f(x)|^2 dx
\]
using the definition of $T_n$ and $B^{-1}$. Taking into account that for all $(a,b)\subseteq(0,x_2)$ we have
\[
\int_a^b|B^{-1}f(x)|^2 dx=\int_a^b\left|\frac{1}{\varphi_{[0]}(x)}\int_x^{x_2}\varphi_{[0]}(t)f(t) dt\right|^2 dx\leq
\int_a^b\left(\int_x^{x_2}\frac{\varphi_{[0]}(t)}{\varphi_{[0]}(x)}|f(t)| dt\right)^2 dx
\]
\[
<\int_{\vartheta(a)}^{\vartheta(b)}\left(\int_\vartheta^{\pi}\frac{\varphi_{[0]}(\tau)}{\varphi_{[0]}(\vartheta)}|f(\tau)| d\tau\right)^2 d\vartheta,
\]
where we used the fact that $\varphi_{[0]}$ is real-valued. Employing the Cauchy-Schwarz inequality and (\ref{ineq}) yields
\[
\int_a^b|B^{-1}f(x)|^2 dx<\int_{\vartheta(a)}^{\vartheta(b)}\left(\int_\vartheta^{\pi}\left(\frac{\varphi_{[0]}(\tau)}{\varphi_{[0]}(\vartheta)}\right)^2 d\tau\right)\left(\int_\vartheta^{\pi}|f(\tau)|^2 d\tau\right)d\vartheta\leq CK\int_{\vartheta(a)}^{\vartheta(b)}\left(\int_\vartheta^{\pi}1~ d\tau\right)\left(\int_\vartheta^{\pi}|f(\tau)|^2 d\tau\right)d\vartheta
\]
\[
\leq CK\int_{\vartheta(a)}^{\vartheta(b)}\left(\int_0^{\pi} d\tau\right)\left(\int_0^{\pi}|f(\tau)|^2 d\tau\right)d\vartheta=
\pi CK(\vartheta(b)-\vartheta(a))\|f\|^2_{L^2}.
\]
Hence,
\[
\|(T_n-B^{-1})f\|^2_{L^2}\leq\pi CK\left[\vartheta\left(\frac{1}{n}\right)+\vartheta(x_2)-\vartheta\left(x_2-\frac{1}{n}\right)\right]\|f\|^2_{L^2}
\]
where we used the fact that $\vartheta(0)=0$. Finally,
\[
\|T_n-B^{-1}\|=\leq \sqrt{\pi CK\left[\vartheta\left(\frac{1}{n}\right)+\vartheta(x_2)-\vartheta\left(x_2-\frac{1}{n}\right)\right]}
\]
and taking into account $\vartheta=\vartheta(x)$ is continuous and therefore sequentially continuous, we find that
\[
\lim_{n\to\infty}\|T_n-B^{-1}\|\leq\sqrt{\pi CK\left[\vartheta(0)+\vartheta(x_2)-\vartheta(x_2)\right]}=0
\]
This completes the proof.~~$\square$
\end{proof}
\begin{theorem}
The angular operator $A$ has compact resolvent.
\end{theorem}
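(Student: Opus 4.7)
The plan is to reduce the statement to Lemma \ref{322} by viewing the angular operator $A$ as a bounded self-adjoint perturbation of the off-diagonal operator $\mathcal{B}$. Writing
\[
A = \mathcal{B} + \mathcal{D}, \qquad \mathcal{D}=\begin{pmatrix} D & 0 \\ 0 & D \end{pmatrix},\quad D=am_e\cos\vartheta(x),
\]
we observe that $\mathcal{D}$ is multiplication by a bounded real-valued function on $L^2(I,dx)^2$, hence a bounded self-adjoint operator defined everywhere, with $\|\mathcal{D}\|\le |am_e|$. In particular $D(A)=D(\mathcal{B})$, which is consistent with the way Theorem~\ref{IV3} was obtained from Theorem~\ref{THMI} (bounded symmetric perturbation of a self-adjoint operator).

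Next I would exploit the factorization at a convenient spectral parameter. Since $\mathcal{B}$ is self-adjoint, $i\eta\in\rho(\mathcal{B})$ for every real $\eta\neq 0$, and $\|(\mathcal{B}-i\eta)^{-1}\|\le 1/|\eta|$. Choosing $|\eta|>\|\mathcal{D}\|$ makes $\|(\mathcal{B}-i\eta)^{-1}\mathcal{D}\|<1$, so $I+(\mathcal{B}-i\eta)^{-1}\mathcal{D}$ is boundedly invertible and the identity
\[
A-i\eta=(\mathcal{B}-i\eta)\bigl[I+(\mathcal{B}-i\eta)^{-1}\mathcal{D}\bigr]
\]
yields $i\eta\in\rho(A)$ together with
\[
(A-i\eta)^{-1}=\bigl[I+(\mathcal{B}-i\eta)^{-1}\mathcal{D}\bigr]^{-1}(\mathcal{B}-i\eta)^{-1}.
\]
By Lemma~\ref{322} the factor $(\mathcal{B}-i\eta)^{-1}$ is compact, so the product, being the composition of a bounded operator with a compact one, is compact.

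Finally, since $A$ is self-adjoint by Theorem~\ref{IV3}, the first resolvent identity
\[
(A-\lambda)^{-1}-(A-i\eta)^{-1}=(\lambda-i\eta)(A-\lambda)^{-1}(A-i\eta)^{-1}
\]
propagates compactness of the resolvent from the single point $i\eta$ to every $\lambda\in\rho(A)$. Combining this with Theorem~\ref{IV3} and the standard consequence that a self-adjoint operator with compact resolvent has a purely discrete spectrum consisting of eigenvalues of finite multiplicity accumulating only at $\pm\infty$, the theorem follows.

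I do not anticipate a genuine obstacle: the core content has already been invested in Lemma~\ref{322}, and the remaining argument is the routine fact that a bounded self-adjoint perturbation preserves the property of having compact resolvent. The only point requiring minor care is keeping track that $D(A)=D(\mathcal{B})$ so that the decomposition $A=\mathcal{B}+\mathcal{D}$ is an operator identity and not merely a formal one, but this is immediate from the boundedness of $\mathcal{D}$.
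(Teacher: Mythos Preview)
Your proof is correct and follows essentially the same approach as the paper: both view $A$ as a bounded self-adjoint perturbation of $\mathcal{B}$ and invoke Lemma~\ref{322}. The only difference is in the transfer step: the paper picks $\mu\in\rho(A)$ and $\nu\in\rho(\mathcal{B})$ directly (both resolvent sets being nonempty by self-adjointness) and applies the second resolvent identity
\[
(A-\mu)^{-1}-(\mathcal{B}-\nu)^{-1}=(A-\mu)^{-1}(\mathcal{B}-A+\mu-\nu)(\mathcal{B}-\nu)^{-1}
\]
in one stroke, whereas you first manufacture a point $i\eta\in\rho(A)$ via a Neumann-series factorization and then propagate with the first resolvent identity. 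Your extra step is harmless but unnecessary, since Theorem~\ref{IV3} already gives $\rho(A)\neq\emptyset$; a minor cosmetic point is that the diagonal of $\mathcal{U}$ is $(-D,D)$ rather than $(D,D)$, but this does not affect boundedness or the argument.
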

\begin{proof}
We know that both $A$ and $\mathcal{B}$ are self-adjoint and therefore their spectra are real. Let us take any $\mu\in\rho(A)$ and $\nu\in\rho(B)$, then the second resolvent equation 
\[
(A-\mu)^{-1}-(\mathcal{B}-\nu)^{-1}=(A-\mu)^{-1}(\mathcal{B}-A+\mu-\nu)(\mathcal{B}-\nu)^{-1}
\]
but $(\mathcal{B}-\nu)^{-1}$ is compact and $(A-\mu)^{-1}$ and $(\mathcal{B}-A+\mu-\nu)$ are bounded, hence the operator on the r.h.s. of the above expression is compact. This implies that also $(A-\mu)^{-1}$ must be compact.~~$\square$
\end{proof}

We conclude this section by deriving a set of necessary and sufficient conditions for the angular eigenvalue problem to have polynomial solutions on the interval $(0,\pi)$. The same conditions can also be used to compute the corresponding eigenvalues. To this purpose we start with (\ref{Smeno}) and make the substitution $S_{-}(\vartheta)=(\sin{\vartheta})^{-1/2}\widetilde{S}_{-}(\vartheta)$ leading to an ODE of the same form as (\ref{Smeno}) with $S_{-}$ and $\widetilde{\mathcal{L}}_{\pm}$ replaced by $\widetilde{S}_{-}$ and 
\[
\widehat{\mathcal{L}}_{\pm}=\frac{d}{d\vartheta}\pm\frac{\Xi}{\Delta_\vartheta}\left(a\omega\sin{\vartheta}+\frac{\widehat{k}}{\sin{\vartheta}}\right),
\] 
respectively. In the case $a=\Lambda=0$ the equation can be solved and the solutions can be expressed in terms of Jacobi polynomials \cite{Davide2,Wink}. If $\Lambda=0$ the same equation can be transformed into a generalized Heun equation \cite{Davide2}. Note that the initial choice of working with Carter's tetrad leads to expressions for the operators $\widehat{\mathcal{L}}_{\pm}$ that are much simpler than the corresponding ones for the operators $\mathcal{L}_{1/2}$ and $\mathcal{L}_{1/2}^\dagger$  appearing in \cite{Davide3} where Kinnersley's tetrad was adopted. Finally, this ODE becomes
\[
\left[\Delta_\vartheta\frac{d^2}{d\vartheta^2}+
\left(\frac{\Delta^{'}_\vartheta}{2}+\frac{am_e\Delta_\vartheta\sin{\vartheta}}{\lambda+am_e\cos{\vartheta}}
\right)\frac{d}{d\vartheta}
-\Xi H^{'}+\frac{\Xi H \Delta^{'}_\vartheta}{2\Delta_\vartheta}-\frac{\Xi^2 H^2}{\Delta_\vartheta}-
\frac{am_e\Xi H\sin{\vartheta}}{\lambda+am_e\cos{\vartheta}}+\lambda^2-a^2 m_e\cos^2{\vartheta}
\right]\widetilde{S}_{-}=0,
\]
where a prime denotes differentiation with respect to $\vartheta$ and 
\[
H(\vartheta)=\frac{\Xi}{\sqrt{\Delta_\vartheta}}\left(a\omega\sin{\vartheta}+\frac{\widehat{k}}{\sin{\vartheta}}\right).
\]
Introducing the variable transformation $x=(1+\cos{\vartheta})/2$ which maps the interval $(0,\pi)$ to the interval $(0,1)$ the above differential equation becomes
\[
\frac{d^2\widetilde{S}_{-}}{dx^2}+\left[-\frac{x-1/2}{x(1-x)}+\frac{\dot{\Delta}_x}{2\Delta_x}-\frac{2am_e}{2am_e x+\lambda-am_e}\right]
\frac{d\widetilde{S}_{-}}{dx}+\left[\frac{\Xi}{4x^2(1-x)^2}\left(\frac{2\widehat{k}x-\widehat{k}}{\Delta_x}-\frac{\Xi\widehat{k}^2}{\Delta_x^2}\right)\right.
\]
\[
\left.+\frac{a\omega\Xi(1-2x)}{x(1-x)\Delta_x}-a\omega\frac{\dot{\Delta}_x}{\Delta_x^2}-\frac{4a^2\omega^2\Xi^2}{\Delta_x^2}-\frac{2am_e\Xi}{2am_e x+\lambda-am_e}\left(\frac{2a\omega}{\Delta_x}+\frac{\widehat{k}}{2x(1-x)\Delta_x}\right)+
\frac{\lambda^2-a^2 m_e^2(2x-1)^2}{x(1-x)\Delta_x}
\right]\widetilde{S}_{-}=0,
\]
where a dot denotes differentiation with respect to $x$ and $\Delta_x=4(\Xi-1)x(x-1)+\Xi$. Let $Q$ denote the variable coefficient multiplying $\widetilde{S}_{-}$ in the above equation. In order to kill terms like $1/x^2$ and $1/(1-x)^2$ appearing in the expression for $Q$ we observe that
\[
\frac{\Xi}{4x^2(1-x)^2}\left(\frac{2\widehat{k}x-\widehat{k}}{\Delta_x}-\frac{\Xi\widehat{k}^2}{\Delta_x^2}\right)=\frac{A_1}{x}+\frac{A_2}{x^2}+\frac{B_1}{1-x}+\frac{B_2}{(1-x)^2}+\frac{\alpha_0+\alpha_1 x}{\Delta_x}+\frac{C_1}{\Delta_x^2}
\]
with
\[
A_1=-\frac{\widehat{k}}{2\Xi}\left[\widehat{k}(5\Xi-4)+2(\Xi-1)\right],\quad
A_2=-\frac{\widehat{k}(\widehat{k}+1)}{4},\quad
B_1=-\frac{\widehat{k}}{2\Xi}\left[\widehat{k}(5\Xi-4)-2(\Xi-1)\right],\quad
B_2=-\frac{\widehat{k}(\widehat{k}-1)}{4},
\]
\[
\alpha_0=-\frac{4\widehat{k}}{\Xi}(2\widehat{k}+1)(\Xi-1)^2,\quad
\alpha_1=\frac{8\widehat{k}(\Xi-1)^2}{\Xi},\quad
C_1=-4\widehat{k}^2(\Xi-1)^2
\]
and we rewrite $Q$ as follows
\[
Q(x)=\frac{A_2}{x^2}+\frac{B_2}{(1-x)^2}+\frac{\Delta_x[A_1+(B_1-A_1)x]+x(1-x)(\alpha_0+\alpha_1 x)+a\omega\Xi(1-2x)}{x(1-x)\Delta_x}+
\]
\[
\frac{C_1-4a^2\omega^2\Xi^2-a\omega\Xi\dot{\Delta}_x}{\Delta_x^2}-\frac{2am_e\Xi}{2am_e x+\lambda-am_e}\left(\frac{2a\omega}{\Delta_x}+\frac{\widehat{k}}{2x(1-x)\Delta_x}\right)+\frac{\lambda^2-a^2 m_e^2(2x-1)^2}{x(1-x)\Delta_x}.
\]
At this point we introduce the s-homotopic transformation $\widetilde{S}_{-}(x)=x^\alpha(1-x)^\beta\Phi(x)$ with $\alpha,\beta\in\mathbb{C}$
and we find that the equation satisfied by $\Phi$ is
\[
\frac{d^2 \Phi}{dx^2}+p(x)\frac{d\Phi}{dx}+q(x)\Phi=0
\]
with
\begin{eqnarray*}
p(x)&=&\frac{2\alpha+1/2-(1+2\alpha+2\beta)x}{x(1-x)}+\frac{\dot{\Delta}_x}{2\Delta_x}-\frac{2am_e}{2am_e x+\lambda-am_e},\\
q(x)&=&\frac{\alpha^2-\alpha/2}{x^2}+\frac{\beta^2-\beta/2}{(1-x)^2}-\frac{2\alpha\beta+\alpha/2+\beta/2}{x(1-x)}
+\frac{\alpha\dot{\Delta}_x}{2x\Delta_x}-\frac{\beta\dot{\Delta}_x}{2(1-x)\Delta_x}+
\frac{2a m_e[(\alpha+\beta)x-\alpha]}{x(1-x)(2am_e x+\lambda-am_e)}+Q(x).
\end{eqnarray*}
The coefficients of $x^{-2}$ and $(1-x)^{-2}$ will vanish whenever
\[
\alpha^2-\frac{\alpha}{2}-\frac{\widehat{k}(\widehat{k}+1)}{4}=0,\quad
\beta^2-\frac{\beta}{2}-\frac{\widehat{k}(\widehat{k}-1)}{4}=0.
\]
The only acceptable roots are those keeping the solution regular at $x=0$ and $x=1$ and they are given as follows
\[
\alpha=\frac{1}{4}+\frac{1}{2}\left|\widehat{k}+\frac{1}{2}\right|=\frac{1}{4}+\frac{|k+1|}{2},\quad
\beta=\frac{1}{4}+\frac{1}{2}\left|\widehat{k}-\frac{1}{2}\right|=\frac{1}{4}+\frac{|k|}{2}.
\]
Finally, multiplying the equation for $\Phi$ by $x(1-x)\Delta_x^2(2am_e x+\lambda-am_e)$ yields
\begin{equation}\label{zwei}
X(x)\frac{d^2\Phi}{dx^2}+Y(x)\frac{d\Phi}{dx}+Z(x)\Phi=0,
\end{equation}
where $X$, $Y$, and $Z$ are polynomials of degree $7$, $6$, and $5$, respectively, given by
\[
X(x)=x(1-x)\Delta_x^2(2am_e x+\lambda-am_e)=\sum_{j=0}^7 a_j x^j
\]
with
\[
a_7=-32 am_e(\Xi-1)^2,\quad a_6=-16(\Xi-1)^2(\lambda-7am_e),\quad a_5=-16 am_e\Xi(\Xi-1)+48(\Xi-1)^2(\lambda-3am_e),
\]
\[
a_4=-8\Xi(\Xi-1)(\lambda-5am_e)-16(\Xi-1)^2(3\lambda-5am_e),\quad
a_3=-2am_e\Xi^2+16\Xi(\Xi-1)(\lambda-2am_e)+16(\Xi-1)^2(\lambda-am_e),
\]
\[
a_2=-\Xi^2(\lambda-3am_e)-8\Xi(\Xi-1)(\lambda-am_e),\quad
a_1=\Xi^2(\lambda-am_e),\quad a_0=0,
\]
\[
Y(x)=X(x)p(x)=\sum_{j=0}^6 b_j x^j
\]
with
\[
b_6=-32am_e(\Xi-1)^2(1+2\alpha+2\beta),\quad
b_5=32(\Xi-1)^2[am_e(7\alpha+5\beta+3)-\lambda(1+\alpha+\beta)],
\]
\[
b_4=-8(\Xi-1)\left\{(\Xi-1)[5am_e(8\alpha+4\beta+3)-2\lambda(6\alpha+4\beta+5)]+am_e(4\alpha+4\beta+1)\right\},
\]
\[
b_3=4(\Xi-1)\left\{(\Xi-1)[20am_e(3\alpha+\beta+1)-\lambda(28\alpha+12\beta+19)]+4am_e(5\alpha+3\beta+1)-\lambda(4\alpha+4\beta+3)\right\},
\]
\[
b_2=-4am_e(\alpha+\beta)+(\Xi-1)\left\{\Xi[-10am_e(10\alpha+2\beta+3)+2\lambda(32\alpha+8\beta+17)]+4am_e(7\alpha-\beta+4)-16\lambda(1+2\alpha)
\right\},
\]
\[
b_1=2am_e(3\alpha+\beta)-\lambda(2\alpha+2\beta+1)+(\Xi-1)\left\{
\Xi[2am_e(11\alpha+\beta+3)-\lambda(18\alpha+2\beta+7)]+2am_e(3\alpha+\beta)-\lambda(2\alpha+2\beta+1)
\right\},
\]
\[
b_0=\frac{\Xi^2}{2}(4\alpha+1)(\lambda-am_e),
\]
and
\[
Z(x)=X(x)q(x)=\sum_{j=0}^5 c_j x^j
\]
with
\[
c_5=-16am_e(\Xi-1)[(\Xi-1)(\widehat{k}^2+\alpha+\beta+4\alpha\beta)+2a^2m_e^2],
\]
\[
c_4=8(\Xi-1)\left\{
(\Xi-1)[\widehat{k}^2(5am_e-\lambda)+am_e(2\widehat{k}+2a\omega+5\alpha+5\beta+20\alpha\beta)\right.
\]
\[
\left.-\lambda(4\alpha\beta+3\alpha+3\beta)]+
2a^2m_e(5am_e^2-\lambda m_e+\omega)
\right\},
\]
\[
c_3=8a^3m_e(\omega^2-m_e^2)+(\Xi-1)\left\{\Xi[8\lambda(2\widehat{k}^2+\widehat{k}+8\alpha\beta+7\alpha+5\beta)-8am_e(6\widehat{k}^2+4\widehat{k}+20\alpha\beta+5\alpha+5\beta)+\right.
\]
\[
\left.8a^2 m_e\omega(a\omega-4)]-8\lambda(2\widehat{k}^2+\widehat{k}+8\alpha\beta+7\alpha+5\beta)+8am_e(5\widehat{k}^2+4\widehat{k}+16\alpha\beta+5\alpha+5\beta+\lambda^2)+\right.
\]
\[
\left.8a^2 m_e(4\lambda m_e+a\omega^2-10am_e^2)\right\},
\]
\[
c_2=4a^2(\lambda-3am_e)(\omega^2-m_e^2)+(\Xi-1)\left\{\Xi[-4\lambda(4\widehat{k}^2+3\widehat{k}+12\alpha\beta+12\alpha+6\beta)+4am_e(8\widehat{k}^2+5\widehat{k}+20\alpha\beta+5\alpha+5\beta)+\right.
\]
\[
\left.4a^2\omega(4m_e+\lambda\omega-3am_e\omega)]+4\lambda(3\widehat{k}^2+3\widehat{k}+\lambda^2+8\alpha\beta+10\alpha+4\beta)-4am_e(5\widehat{k}^2+5\widehat{k}+8\alpha\beta+6\alpha+4\beta+3\lambda^2)+\right.
\]
\[
\left. 4a^2[\lambda(\omega^2-6m_e^2)-am_e(3\omega^2-10m_e^2)]\right\},
\]
\[
c_1=-4a^2\lambda(\omega^2-m_e^2)-2a^3 m_e(3m_e^2-2\omega^2)-am_e(\widehat{k}^2+4\alpha\beta-\alpha-\beta-2\lambda^2)-2\lambda a\omega+
\]
\[
(\Xi-1)\left\{\Xi[2\lambda(4\widehat{k}^2+3\widehat{k}+8\alpha\beta+9\alpha+3\beta)-am_e(13\widehat{k}^2+4\widehat{k}+20\alpha\beta+5\alpha+5\beta)+2a\omega(2a^2\omega m_e-2a\omega\lambda-\lambda)]\right.
\]
\[
\left.-4\lambda(\widehat{k}^2+\widehat{k}+\lambda^2+2\alpha)+am_e(3\widehat{k}^2+4\widehat{k}-4\alpha\beta+9\alpha+\beta+6\lambda^2)
 -2a(\omega\lambda+5a^2 m_e^3-2a^2\omega^2 m_e-4a\lambda m_e^2+2a\lambda\omega^2)\right\},
\]
\[
c_0=\frac{\Xi}{2}\left\{
(\Xi-1)[-\lambda(5\widehat{k}^2+2\widehat{k}+4\alpha\beta+5\alpha+\beta)+am_e(5\widehat{k}^2+4\alpha\beta+\alpha+\beta)+2a\omega(\lambda-am_e)]
\right.
\]
\[
\left.-\lambda(\widehat{k}^2-2\lambda^2+4\alpha\beta+\alpha+\beta)+am_e(\widehat{k}^2-2\widehat{k}+4\alpha\beta-3\alpha+\beta-2\lambda^2)+2a(\lambda-am_e)(\omega-am_e^2)\right\}.
\]
Note that the coefficients $a_7,\cdots,a_0$ are not algebraically independent since their sum vanishes. 
\begin{theorem}
If $a=0$, equation (\ref{zwei}) admits a constant polynomial solution.
\end{theorem}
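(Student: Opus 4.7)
The plan is to substitute the ansatz $\Phi(x)\equiv c$ with $c\in\mathbb{C}\setminus\{0\}$ directly into equation (\ref{zwei}) and observe that, since the first and second derivatives of a constant vanish, the equation collapses to the single algebraic requirement $Z(x)\equiv 0$ as a polynomial in $x$. Thus the content of the theorem is to exhibit values of the separation constant $\lambda$ for which all the coefficients $c_{0},c_{1},\ldots,c_{5}$ of $Z$ vanish when $a=0$, and then $\Phi\equiv c$ is a bona fide solution.

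The first observation is that $a=0$ automatically forces $\Xi=1+\Lambda a^{2}/3=1$, so that $a$ and $\Xi-1$ vanish simultaneously. A term-by-term inspection of the expressions for $c_{1},c_{2},c_{3},c_{4},c_{5}$ listed above shows that every single summand in each of them carries at least one factor of $a$ or of $(\Xi-1)$; consequently all five coefficients vanish identically with no restriction on $\lambda$. The entire substance of the theorem is therefore concentrated in the constant term $c_{0}$.

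Evaluating $c_{0}$ under $a=0$ and $\Xi=1$, the only surviving summand is $-\tfrac{\lambda}{2}(\widehat{k}^{2}-2\lambda^{2}+4\alpha\beta+\alpha+\beta)$, so that $Z\equiv 0$ reduces to the single equation $\lambda\bigl(2\lambda^{2}-\widehat{k}^{2}-4\alpha\beta-\alpha-\beta\bigr)=0$. Inserting the explicit values $\widehat{k}=k+\tfrac{1}{2}$, $\alpha=\tfrac{1}{4}+\tfrac{|k+1|}{2}$, and $\beta=\tfrac{1}{4}+\tfrac{|k|}{2}$, and splitting on the sign of $k$, elementary arithmetic shows that $\widehat{k}^{2}+4\alpha\beta+\alpha+\beta$ equals $2(k+1)^{2}$ when $k\geq 0$ and $2k^{2}$ when $k\leq -1$. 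Hence $c_{0}=0$ precisely for $\lambda=0$ and for $\lambda=\pm(k+1)$ when $k\geq 0$, or $\lambda=\pm k$ when $k\leq -1$. For any such value of the separation constant, the constant polynomial $\Phi\equiv c$ is a genuine solution of (\ref{zwei}).

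The main obstacle is not conceptual but purely one of book-keeping: one must scan the rather long expressions for $c_{1},\ldots,c_{5}$ and confirm by direct inspection that no monomial survives the simultaneous substitution $a=0$, $\Xi=1$, and then perform the two-case simplification of $\widehat{k}^{2}+4\alpha\beta+\alpha+\beta$ separately for $k\geq 0$ and $k\leq -1$. A minor caveat worth flagging is the degenerate root $\lambda=0$, for which $X$ and $Y$ themselves vanish identically and equation (\ref{zwei}) reduces to the trivial identity $0=0$; the physically meaningful constant eigenvalues are the nonzero ones $\pm(k+1)$ or $\pm k$, which recover the familiar Dirac spherical-harmonic spectrum expected in the Schwarzschild--de Sitter limit.
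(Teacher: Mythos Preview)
Your approach is essentially the same as the paper's: substitute a constant $\Phi$ into (\ref{zwei}), observe that the derivative terms drop out, and reduce to the condition $Z(x)\equiv 0$, i.e.\ $c_5=\cdots=c_0=0$. Where you diverge is that the paper simply asserts ``in the case $a=0$ all coefficients $c_i$ vanish identically,'' whereas you correctly notice that this is \emph{not} true for $c_0$: setting $a=0$ and $\Xi=1$ leaves the surviving piece $c_0=-\tfrac{\lambda}{2}\bigl(\widehat{k}^{2}-2\lambda^{2}+4\alpha\beta+\alpha+\beta\bigr)$, which is nonzero for generic $\lambda$. Your subsequent reduction of $\widehat{k}^{2}+4\alpha\beta+\alpha+\beta$ to $2(k+1)^{2}$ for $k\ge 0$ and $2k^{2}$ for $k\le -1$ is correct, and the resulting eigenvalues $\lambda=\pm(k+1)$ or $\lambda=\pm k$ are exactly the Dirac half-integer-spin spectrum expected in the Schwarzschild--de~Sitter limit, consistent with the remark following the theorem.

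In short, your argument is sound and in fact sharper than the paper's own proof: the paper glosses over the fact that the constant solution exists only for discrete values of the separation constant, while you make this explicit. Your caveat about $\lambda=0$ is also well placed; since $X$, $Y$, and $Z$ all carry an overall factor of $\lambda$ when $a=0$, the equation degenerates to a tautology there and one should indeed discard that root as unphysical.
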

\begin{proof}
Without loss of generality let $\Phi(x)=1$. This will be a solution of (\ref{zwei}) whenever $Z(x)=0$, that is $c_5=\cdots=c_0=0$. From the expression giving $c_5$ we see that it will vanish if $a=0$ for which $\Xi=1$ or $m_e=0$. In the case $a=0$ all coefficients $c_i$ vanish identically and therefore $\Phi(x)=1$ is a solution of (\ref{zwei}). If $m_e=0$ and $\widehat{k}\geq 1/2$, we find that $c_5=-4(\Xi-1)^2(2\widehat{k}+3)(2\widehat{k}+1)$ can never vanish. The same situation occurs for $\widehat{k}\leq-1/2$.~~$\square$
\end{proof}
\begin{remark}
The case $a=0$ corresponds to the Schwarzschild-deSitter metric for which one can show that (\ref{zwei}) becomes a hypergeometric equation and the eigenvalues can be computed explicitely as given in \cite{Davide3,Wink}.
\end{remark}
To find a set of necessary and sufficient conditions for the existence of polynomial solutions of the form $\Phi(x)=x-x_1$, we generalize the so-called functional (or analytic) Bethe Ansatz method used in \cite{Zhang} so that it can be applied to equation (\ref{zwei}). 
\begin{theorem}
The polynomial $\Phi(x)=x-x_1$ is a solution of (\ref{zwei}) if and only if 
\begin{eqnarray*}
&&b_6+c_5=0,\\
&&b_6 x_1+b_5+c_4=0,\\
&&b_6 x_1^2+b_5 x_1+b_4+c_3=0,\\
&&b_6 x_1^3+b_5 x_1^2+b_4 x_1+b_3+c_2=0,\\
&&b_6 x_1^4 +b_5 x_1^3+b_4 x_1^2+b_3 x_1+b_2+c_1=0,
\end{eqnarray*}
where $x_1$ is determined by the equation $Y(x_1)=0$.
\end{theorem}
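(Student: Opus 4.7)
My plan is to exploit the extreme simplicity of the ansatz $\Phi(x) = x - x_1$: since $\Phi'(x) = 1$ and $\Phi''(x) = 0$, the entire second-order ODE (\ref{zwei}) collapses, upon substitution, to the polynomial identity
\begin{equation*}
Y(x) + (x - x_1)\, Z(x) \equiv 0.
\end{equation*}
Because $Y$ has degree $6$ and $(x-x_1)Z$ has degree $6$, this is a degree-$6$ polynomial in $x$ that must vanish identically, equivalently its seven coefficients $[x^j]$ for $j=0,\dots,6$ must all be zero. This reformulation is the whole content of the functional Bethe ansatz at this level, and it reduces the problem to a purely algebraic comparison of coefficients.

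I would then argue the necessary direction by first evaluating the identity at $x = x_1$, which forces $Y(x_1) = 0$; this singles out $x_1$ as a root of the explicit polynomial $Y$. Given $Y(x_1)=0$, polynomial division allows me to write $Y(x) = (x-x_1)\widetilde{Y}(x)$ with $\widetilde{Y}$ of degree $5$, and the identity reduces to $\widetilde{Y}(x) = -Z(x)$. Using synthetic division (Horner's scheme) the coefficients of $\widetilde{Y}$ read successively $b_6$, $b_5 + b_6 x_1$, $b_4 + b_5 x_1 + b_6 x_1^2$, $b_3 + b_4 x_1 + b_5 x_1^2 + b_6 x_1^3$, $b_2 + b_3 x_1 + b_4 x_1^2 + b_5 x_1^3 + b_6 x_1^4$, so equating them termwise with $-c_5, -c_4, -c_3, -c_2, -c_1$ reproduces exactly the five cascade conditions listed in the theorem.

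For the sufficient direction I would run the argument in reverse: assuming the five cascade relations together with $Y(x_1)=0$, I reconstruct $\widetilde{Y}(x) + Z(x)$ from its coefficients and find it to be the zero polynomial, so that $Y(x) + (x-x_1)Z(x) = (x-x_1)\bigl[\widetilde{Y}(x)+Z(x)\bigr] = 0$ identically; substituting this into (\ref{zwei}) shows that $\Phi(x)=x-x_1$ is indeed a solution. The converse is therefore a tautology once the synthetic-division picture is in place.

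The proof is essentially bookkeeping rather than analysis, and there is no serious obstacle: the polynomials $X$, $Y$, $Z$ have already been computed explicitly in the text, the degrees $7$, $6$, $5$ are compatible with the $(x-x_1)$ factor, and the cascade structure is forced by Horner's algorithm. The only point requiring a small remark is that the five listed cascade conditions together with $Y(x_1)=0$ capture the vanishing of $[x^j]\bigl(Y+(x-x_1)Z\bigr)$ for $j=6,5,4,3,2$ and for $j=0$ (the latter via evaluation at $x_1$); the remaining coefficient, corresponding to the last step of the Horner scheme, is understood to be imposed implicitly by the requirement $\widetilde{Y}=-Z$ and should be stated explicitly for completeness.
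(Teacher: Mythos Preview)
Your proposal is correct and follows essentially the same route as the paper: both reduce the question to the polynomial identity $Y(x)+(x-x_1)Z(x)\equiv 0$, extract $Y(x_1)=0$ (you by evaluation, the paper by computing a residue), and then read off the cascade relations from the quotient $\widetilde{Y}(x)=(Y(x)-Y(x_1))/(x-x_1)$. Your Horner/synthetic-division bookkeeping is algebraically identical to the paper's partial-fraction expansion; only the language differs.

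Your closing remark is well taken and in fact sharper than the paper's treatment: the degree-$6$ identity imposes \emph{seven} coefficient conditions, while the theorem lists five cascade relations plus $Y(x_1)=0$. The remaining constraint $\tilde b_0+c_0=0$ (equivalently the paper's ``$-c_0=K$'') is needed for sufficiency but is left implicit in the statement; the paper merely records it as a note rather than as part of the hypotheses. So your observation that this last condition ``should be stated explicitly for completeness'' is exactly right.
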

\begin{proof}
First of all, by replacing the ansatz $\Phi(x)=x-x_1$ into (\ref{zwei}) we find that $Y(x)+Z(x)(x-x_1)=0$ and hence
\[
-c_0=(x-x_1)^{-1}\sum_{i=0}^6 b_i x^i+\sum_{i=0}^5 c_i x^i.
\]
The l.h.s. is a constant while the r.h.s. is a meromorphic function with simple pole at $x=x_i$ and a singularity at $x=\infty$. The residue of $-c_0$ at $x=x_1$ is
\[
\mbox{Res}(-c_0,x=x_1)=\lim_{x\to x_1}(x-x_1)(-c_0)=\sum_{i=0}^6 b_i x^i_1=Y(x_1).
\]
Then, we have
\[
-c_0-(x-x_1)^{-1}\mbox{Res}(-c_0,x=x_1)=(x-x_1)^{-1}\sum_{i=0}^6 b_i(x^i-x_1^i)+\sum_{i=0}^5 c_i x^i
\]
and after simplification of $x-x_1$ in the denominator of the term appearing on the r.h.s. of the above expression we end up with
\begin{equation}\label{zvezda}
-c_0=\sum_{i=1}^5\gamma_i x^i+\frac{Y(x_1)}{x-x_1}+K
\end{equation}
with
\[
\gamma_1=\sum_{i=0}^4 b_{i+2}x^i_1+c_1,\quad
\gamma_2=\sum_{i=0}^3 b_{i+3}x^i_1+c_2,\quad
\gamma_3=\sum_{i=0}^2 b_{i+4}x^i_1+c_3,
\]
\[
\gamma_4=b_6 x_1+b_5+c_4,\quad
\gamma_5=b_6+c_5,\quad
K=\sum_{i=0}^4 b_{i+1}x^i_1.
\]
The r.h.s. of (\ref{zvezda}) is a constant if and only if the $\gamma_i$'s as well as $Y(x_1)$ vanish. Note that when this set of equations are satisfied, then $-c_0=K$.~~$\square$
\end{proof}
The same strategy can be applied to show that $\Phi(x)=(x-x_1)(x-x_2)$ will be a solution of (\ref{zwei}) whenever the following set of conditions are satisfied, namely
\[
2(a_7+b_6)+c_5=0,\quad (2a_7+b_6)(x_1+x_2)+2(a_6+b_5)+c_4=0,
\]
\[
2(a_7+a_5)(x_1^2+x_1 x_2+x_2^2)+(2a_6+b_5)(x_1+x_2)+b_6(x_1^2+x_2^2)+2(a_5+b_4)+c_3=0,
\]
\[
(x_1+x_2)[2a_7(x_1^2+x_2^2)+2a_5+b_4]+2a_6(x_1^2+x_1 x_2+x_2^2)+b_6(x_1^3+x_2^3)+b_5(x_1^2+x_2^2)+2(a_4+b_3)+c_2=0,
\]
\[
2a_7(x_1^4+x_1^3 x_2+x_1^2 x_2^2+x_1 x_2^3+x_2^4)+(x_1+x_2)[2a_6(x_1^2+x_2^2)+2a_4+b_3]+b_6(x_1^4+x_2^4)+
\]
\[
b_5(x_1^3+x_2^3)+b_4(x_1^2+x_2^2)+2(a_3+b_2)+c_1=0,
\]
\[
2a_7(x_1^5+x_1^4 x_2+x_1^3 x_2^2+x_1^2 x_2^3+x_1 x_2^4+x_2^5)+2a_6(x_1^4+x_1^3 x_2+x_1^2 x_2^2+x_1 x_2^3+x_2^4)+(x_1+x_2)[2a_5(x_1^2+x_2^2)+2a_3+b_2]+
\]
\[
2a_4(x_1^2+x_1 x_2+x_2^2)+b_6(x_1^5+x_2^5)+b_5(x_1^4+x_2^4)+b_4(x_1^3+x_2^3)+b_3(x_1^2+x_2^2)+2(a_2+b_1)+c_0=0
\]
together with $\mbox{Res}(-c_0,x=x_n)=0$ for $n=1,2$. This equation gives rise to two algebraic equations for the roots $x_n$, more precisely
\[
\sum_{m=1 \atop m\neq n}^2\frac{2}{x_m-x_n}+\frac{Y(x_n)}{X(x_n)}=0.
\]

\section{The radial system}
In this section we show that the Dirac equation in the non-extreme Kerr-deSitter metric does not allow for bound state solutions by computing the deficiency index of the radial operator associated to the system (\ref{radial}). More precisely, since the deficiency index counts the number of square integrable solutions, it suffices to show that the deficiency index of the radial operator vanishes. We start by observing that the components of the radial spinor satisfy the relations $R_{-}=\overline{R}_{+}$ and $R_{+}=\overline{R}_{-}$ as it can be easily verified from (\ref{radial}). This property motivates the ansatz $R_{-}(r)=F(r)-iG(r)$ and $R_{+}(r)=F(r)+iG(r)$ used in the proof of the following theorem.
\begin{theorem}
In the non-extreme Kerr-deSitter metric the Dirac equation does not possess bound state solutions.
\end{theorem}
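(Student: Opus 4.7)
The plan is to build on the ansatz $R_-(r) = F(r) - iG(r)$, $R_+(r) = F(r) + iG(r)$ already suggested by the conjugation relation $R_+ = \overline{R_-}$ visible in (\ref{radial}). Substituting into (\ref{radial}) and splitting into real and imaginary parts yields an equivalent real $2\times 2$ first-order system,
\begin{align*}
\sqrt{\Delta_r}\,F' &= -\sqrt{\Delta_r}\,V\,G + \lambda\,F - m_e r\,G,\\
\sqrt{\Delta_r}\,G' &= \sqrt{\Delta_r}\,V\,F - m_e r\,F - \lambda\,G,
\end{align*}
where $V(r) = \Xi[\omega(r^2+a^2)+a\widehat{k}]/\Delta_r$.

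Setting $H(r) := F^2 + G^2 = |R_\pm(r)|^2$, I would multiply the first equation by $F$ and the second by $G$ and add; the $V$-terms cancel and one obtains the identity
\[
\sqrt{\Delta_r}\,H'(r) = 2\lambda(F^2 - G^2) - 4 m_e r\, F G.
\]
Using $|F^2 - G^2|\le H$ and $2|FG|\le H$ this upgrades to the Gronwall-type bound
\[
\bigl|(\ln H)'(r)\bigr| \le \frac{2(|\lambda| + |m_e|\,r)}{\sqrt{\Delta_r(r)}}.
\]
In the non-extreme case $\Delta_r$ has simple zeros at $r_+$ and $r_c$, so $1/\sqrt{\Delta_r(r)}$ is integrable near each horizon. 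Integrating the inequality, $\ln H$ has finite one-sided limits as $r\to r_+^+$ and $r\to r_c^-$. By uniqueness of solutions to the ODE system, a non-trivial solution has $H(r) > 0$ on the whole of $(r_+, r_c)$, so these limits are strictly positive and there exists $c > 0$ with $H(r) \ge c$ in a one-sided neighbourhood of each horizon.

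To finish, the Hilbert space for the radial operator inherits from the Dirac inner product on a constant-$t$ slice a weight of the form $w(r) = \Xi(r^2+a^2)/\Delta_r$, which diverges like $1/(r-r_+)$ as $r\to r_+^+$ and like $1/(r_c - r)$ as $r\to r_c^-$. Combined with the lower bound $H \ge c > 0$ near each horizon this forces
\[
\int_{r_+}^{r_c} |R_\pm(r)|^2\,w(r)\,dr = \int_{r_+}^{r_c} H(r)\,w(r)\,dr = +\infty,
\]
so no non-trivial radial solution is square-integrable, the deficiency index of the radial operator vanishes, and the Dirac equation admits no bound states. The main obstacle I anticipate is the only non-mechanical step, namely justifying the precise form of the weight $w$: tracing the Dirac current through the tetrad, the conjugation matrices $S$ and $\Gamma$ of (\ref{S}) and the separation ansatz (\ref{Chandra1}) so that the $1/(r-r_+)$ and $1/(r_c-r)$ singularities of $w$ at the horizons are firmly established. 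Once this is done the Gronwall estimate above makes the absence of bound states immediate.
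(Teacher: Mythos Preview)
Your argument is correct and takes a genuinely different, more elementary route than the paper. The paper passes to the tortoise coordinate $u$ defined by $du/dr=\Xi(r^2+a^2)/\Delta_r$, obtains a Dirac system $J\Omega'+B(u)\Omega=\omega\Omega$ on the whole real line, and then invokes asymptotic ODE theory (Coppel) together with the deficiency-index machinery of Lesch--Malamud to show the limit-point case at both ends and hence $N_\pm(S)=0$. Your approach sidesteps all of this: the crucial observation that the $V$-term drops out of the identity for $(F^2+G^2)'$ reduces everything to the Gronwall bound $|(\ln H)'|\le 2(|\lambda|+m_e r)/\sqrt{\Delta_r}$, after which only the contrast between the \emph{integrability} of $1/\sqrt{\Delta_r}$ and the \emph{non-integrability} of the weight $1/\Delta_r$ at simple horizons is needed. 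This makes the role of ``non-extreme'' completely transparent. What the paper's route buys in exchange for the extra machinery is essential self-adjointness of the radial Hamiltonian as a byproduct; your closing remark that ``the deficiency index vanishes'' is a slight overreach, since your estimate uses real $F,G$ and hence real $\omega$---which is exactly what the theorem requires, but not what the deficiency indices ask for. Finally, the weight $w(r)=\Xi(r^2+a^2)/\Delta_r$ you single out as the main obstacle is precisely the one the paper adopts for $\langle R,R\rangle$ (it is the standard radial weight coming from the conserved Dirac current through the Carter tetrad and the matrices $S,\Gamma$), so once that computation is carried out your proof is complete.
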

\begin{proof}
The proof relies on an application of Theorem~$5.2$ in \cite{Lesch} and follows the strategy adopted in the proof of Theorem~$5.1$ in \cite{BN}. To this purpose we cast the radial system (\ref{radial}) into the form
\begin{equation}\label{rs}
\left(
\begin{array}{cc}
\sqrt{\Delta_r}\frac{d}{dr}+i\frac{V(r)}{\sqrt{\Delta_r}} & -im_e r-\lambda\\
im_e r-\lambda & \sqrt{\Delta_r}\frac{d}{dr}+i\frac{V(r)}{\sqrt{\Delta_r}}
\end{array}
\right)\left(\begin{array}{c}
R_{-}\\
R_{+}
\end{array}\right)=0,\quad V(r)=\Xi\left[\omega(r^2+a^2)+a\widehat{k}\right].
\end{equation}
In order to bring (\ref{rs}) into a Dirac system we transform the dependent variable according to $R_{-}(r)=F(r)-iG(r)$ and $R_{+}(r)=F(r)+iG(r)$ and we introduce a new independent variable $u$ defined through the relation
\begin{equation}\label{ve}
\frac{du}{dr}=\frac{\Xi(r^2+a^2)}{\Delta_r}.
\end{equation}
Then, (\ref{rs}) becomes
\begin{equation}\label{dir_sys}
(\mathfrak{U}\Omega)(u):=J\frac{d\Omega}{du}+B(u)\Omega=\omega\Omega
\end{equation}
with $\Omega=(F,G)^T$ and
\[
J=\left(\begin{array}{cc}
0 & 1\\
-1 & 0
\end{array}
\right),\quad
B(u)=\left(\begin{array}{cc}
\frac{m_e r\sqrt{\Delta_r}-a\Xi\widehat{k}}{\Xi(r^2+a^2)} & \frac{\lambda\sqrt{\Delta_r}}{\Xi(r^2+a^2)}\\
\frac{\lambda\sqrt{\Delta_r}}{\Xi(r^2+a^2)} & -\frac{m_e r\sqrt{\Delta_r}+a\Xi\widehat{k}}{\Xi(r^2+a^2)}
\end{array}
\right),
\]
where the radial variable $r$ is now a function of the tortoise coordinate $u$. In the non-extreme case the behaviour of the solution of (\ref{ve}) in proximity of the event and cosmological horizons is captured by the following formulae
\begin{equation}\label{asymp1}
u(r)=\frac{1}{2\kappa_{+}}\ln{(r-r_{+})}+\mathcal{O}(r-r_{+}), \quad
u(r)=\frac{1}{2\kappa_{c}}\ln{(r_c-r)}+\mathcal{O}(r-r_c),
\end{equation}
where $\kappa_{+}$ and $\kappa_{c}$ denote the surface gravity at the event and cosmological horizon, respectively, and are given by
\[
\kappa_{+}=\frac{3(r_{+}-M)-\Lambda r_{+}(2r_{+}^2+a^2)}{3\Xi(r_{+}^2+a^2)},\quad
\kappa_{c}=\frac{\Lambda r_{c}(2r_{c}^2+a^2)-3(r_{c}-M)}{3\Xi(r_{c}^2+a^2)}.
\]
Since $\Delta_r>0$ on the interval $(r_{+},r_c)$, it follows from (\ref{ve}) that $u$ must be an increasing function of $r$ and therefore, the numerator in the expression of $\kappa_{+}$ is positive. This means that $u\to-\infty$ as $r\to r_{+}$. By the same token we can also conclude that the numerator in the expression of $\kappa_c$ must be negative and therefore, $u\to+\infty$ as $r\to r_c$. Furthermore, $u$ maps the interval $(r_{+},r_c)$  to the real line. Taking into account that the radial spinor $R=(R_{-},R_{+})^T$ is square integrable on $(r_{+},r_c)$ if
\[
\langle R,R\rangle=\Xi\int_{r_{+}}^{r_c} dr~\frac{r^2+a^2}{\Delta_r}|R|^2=2\Xi\int_{r_{+}}^{r_c} dr~\frac{r^2+a^2}{\Delta_r}R_{-}R_{+}<+\infty,
\]
it results that the transformed spinor $\Omega$ is square integrable whenever
\begin{equation}\label{ip}
\langle\Omega,\Omega\rangle=\int_{-\infty}^{+\infty} du~\left[F(u)^2+G(u)^2\right]<+\infty.
\end{equation}
The formal differential operator $\mathfrak{U}$ is formally symmetric since $J=-J^{*}$ and $B=B^{*}$ where star denotes complex transposition. Let $S_{min}$ be the minimal operator associated to $\mathfrak{U}$ such that $S_{min}$ acts on the Hilbert space $L^2(\mathbb{R},du)^2$ equipped with the inner-product (\ref{ip}). Then, $S_{min}$ with domain of definition $D(S_{min})=C^\infty_0(\mathbb{R})^2$ such that $S_{min}\Omega:=\mathfrak{U}\Omega$ for $\Omega\in D(S_{min})$ is densely defined and closable. We denote by $S$ the closure of $S_{min}$ and apply Neumark's decomposition method \cite{Neumark}. To this purpose, let $S_{min,\pm}$ be the minimal operators associated to $\mathfrak{U}$ when restricted on the half-lines $[0,+\infty)$ and $(-\infty,0]$, respectively. We consider $S_{min,\pm}$ on $L^2(\mathbb{R}_\pm,du)^2$ equipped with (\ref{ip}). The operators $S_{min,\pm}$ with domain of definition $D(S_{min,\pm})=C_0^\infty(\mathbb{R}_\pm)^2$ and $S_{min,\pm}\Omega_\pm:=\mathfrak{U}\Omega_\pm$ for $\Omega_\pm\in D(S_{min,\pm})$ are densely defined and closable. Furthermore, $\mathfrak{U}$ is in the limit point case at $\pm\infty$. This can be seen as follows. First of all, we recall that since the limit point and limit circle cases are mutually exclusive, we can determine the appropriate case if we examine the solution of (\ref{dir_sys}) for a single value of $\omega$. Hence, without loss of generality we set $\omega=0$ and consider the system
\begin{equation}\label{dirac_sys_2}
J\frac{d\Omega}{du}+B(u)\Omega=0.
\end{equation}
First of all, we observe that the matrix $B(u)$ converges for $u\to-\infty$ to the constant matrix
\[
B_0=\lim_{u\to-\infty}B(u)=\left(\begin{array}{cc}
-\frac{a\widehat{k}}{r_{+}^2+a^2} & 0\\
0 & -\frac{a\widehat{k}}{r_{+}^2+a^2}
\end{array}
\right).
\]
This observation together with (\ref{asymp1}) suggests an asymptotic expansion of $B(u)$ in powers of $e^{\kappa_{+}u}$. A straightforward computation gives
\[
B(u)=B_0+\widetilde{B}(u),\quad |\widetilde{B}(u)|\leq C e^{\kappa_{+}u}
\]
for some positive constant $C$. Finally, applying Theorem~$1$ in Ch. $IV$ in \cite{Coppel} yields that the system (\ref{dirac_sys_2}) has asymptotic solutions for $u\to-\infty$ given by
\begin{equation}\label{IaIIa}
F(u)=\cos{\left(\frac{a\widehat{k}}{r_{+}^2+a^2}u\right)}\left(\begin{array}{c}
1\\
0
\end{array}\right)+\mathcal{O}\left(e^{\kappa_{+}u}\right),\quad
G(u)=\sin{\left(\frac{a\widehat{k}}{r_{+}^2+a^2}u\right)}\left(\begin{array}{c}
0\\
1
\end{array}\right)+\mathcal{O}\left(e^{\kappa_{+}u}\right).
\end{equation}
The problem at the cosmological horizon can be treated similarly and for $u\to+\infty$ we find that
\begin{equation}\label{IIIaIVa}
F(u)=\cos{\left(\frac{a\widehat{k}}{r_{c}^2+a^2}u\right)}\left(\begin{array}{c}
1\\
0
\end{array}\right)+\mathcal{O}\left(e^{\kappa_{c}u}\right),\quad
G(u)=\sin{\left(\frac{a\widehat{k}}{r_{c}^2+a^2}u\right)}\left(\begin{array}{c}
0\\
1
\end{array}\right)+\mathcal{O}\left(e^{\kappa_{c}u}\right).
\end{equation}
By inspecting (\ref{IaIIa}) and (\ref{IIIaIVa}) we can immediately conclude that the differential operator $\mathfrak{U}$ is in the limit point case at $\pm\infty$. This implies that the operators $S_{min,\pm}$ are essentially self-adjoint. Let $S_{\pm}$ denote the closure of $S_{min,\pm}$ and $N_{\pm}(S_\pm)$ the corresponding deficiency indices. If $\nu_\pm$ denotes the number of positive and negative eigenvalues of the matrix $iJ$, then, since $\nu_+=1=\nu_-$, Theorem~$5.2$ in \cite{Lesch} implies that $N_{\pm}(S_+)=1=N_{\pm}(S_-)$. Since zero is the only solution of (\ref{dirac_sys_2}) in $L^2(\mathbb{R},du)$ the orginal system (\ref{dir_sys}) is definite on $\mathbb{R}_+$ and $\mathbb{R}_{-}$ in the sense of \cite{Lesch}. Finally, $(5.11a)$ in Proposition~$5.4$ in \cite{Lesch} yields that the deficiency indices for $S$ are 
\[
N_\pm(S)=N_\pm(S_+)+N_\pm(S_-)-2=0.
\]
This implies that the radial system (\ref{dir_sys}) does not possess any square integrable solution on the whole real line and this completes the proof.~~$\square$
\end{proof}

\appendix
\section{Derivation of formulae (\ref{r12}) and (\ref{r34})}
First of all, the quartic equation (\ref{**}) is already in reduced form. Let $r_i$ with $i=1,\cdots,4$ denote the roots of (\ref{**}).
 Applying Vieta's formulae we find the following useful relations among the roots of $\Delta_r$, namely
\[
r_{1}+r_{2}+r_{3}+r_4=0,\quad
r_{1}(r_{2}+r_{3}+r_4)+r_{2}(r_{3}+r_4)+r_3 r_4=p,
\]
\[
r_{1}r_{2}r_{3}r_4=u,\quad
r_{1}(r_{2}r_{3}+r_{2}r_4+r_{3}r_{4})+r_{2}r_{3}r_4=-q
\]
with $p$, $q$, $u$ defined in (\ref{pqu}). Let us consider the following polynomials in $r_i$, namely
\[
z_1=(r_1+r_2)(r_3+r_4),\quad z_2=(r_1+r_3)(r_2+r_4),\quad z_3=(r_1+r_4)(r_2+r_3).
\]
Using Vieta's formulae we find that
\begin{eqnarray}
z_1+z_2+z_3&=&2(r_1 r_2+r_1 r_3+r_1 r_4+r_2 r_3+r_2 r_4+r_3 r_4)=2p,\label{z1}\\
z_1z_2+z_2z_3+z_3z_1&=&p^2-4u,\quad z_1z_2z_3=-q^2 \label{z2}.
\end{eqnarray}
This means that $z_1$, $z_2$, and $z_3$ are roots of the so-called resolvent cubic $z^3-2pz^2+(p^2-4u)z+q^2=0$. Since (\ref{**}) is 
reduced, Vieta's formula $r_1+r_2+r_3+r_4=0$ holds and equations (\ref{z1}) and (\ref{z2}) can be solved yielding $r_1+r_2=\sqrt{-z_1}$, $r_3+r_4=-\sqrt{-z_1}$, $r_1+r_3=\sqrt{-z_2}$, $r_2+r_4=-\sqrt{-z_2}$, $r_1+r_4=\sqrt{-z_3}$, and $r_2+r_3=-\sqrt{-z_3}$. The ambiguity in the choice of the sign of the square root can be fixed according to the relation $\sqrt{-z_1}\sqrt{-z_2}\sqrt{-z_3}=-q$. Taking into account that in the present case $q>0$ we choose the sign of the square root so that
\begin{eqnarray*}
2r_1&=&-\sqrt{-z_1}-\sqrt{-z_2}-\sqrt{-z_3},\quad 2r_2=-\sqrt{-z_1}+\sqrt{-z_2}+\sqrt{-z_3},\\
2r_3&=&+\sqrt{-z_1}-\sqrt{-z_2}+\sqrt{-z_3},\quad 2r_4=+\sqrt{-z_1}+\sqrt{-z_2}-\sqrt{-z_3}.
\end{eqnarray*}
To find the roots $z_1$, $z_2$, and $z_3$ of the resolvent cubic we reduce it to the special cubic $w^3+\widehat{p} w+\widehat{q}=0$ by means of the Tschirnhaus transformation $z=w+(2p)/3$ where
\[
\widehat{p}=-\left(\frac{p^2}{3}+4u\right),\quad \widehat{q}=\frac{2}{27}p^3-\frac{8}{3}pu+q^2.
\]
Let $w_1$, $w_2$, and $w_3$ denote the roots of the special cubic. With the help of Vieta's formulae  
\[
w_1+w_2+w_3=0,\quad w_1 w_2+w_2 w_3+w_3 w_1=\widehat{p},\quad w_1 w_2 w_3=-\widehat{q}
\]
the discriminant of the special cubic turns to be
\[
D=\prod_{1\leq j<\ell\leq 3}(w_j-w_\ell)^2=-4(w_1w_2+w_2w_3+w_3w_1)^3-27(w_1w_2w_3)^2=-4\widehat{p}^3-27\widehat{q}^2.
\]
Let $\omega=e^{2\pi i/3}$ be the primitive root of unity and introduce the Lagrange substitutions
\[
\xi_k=\frac{1}{3}\sum_{j=1}^3 \omega^{(j-1)k}w_j,\quad k=0,1,2.
\]
Using $\omega^3=1$ and $w_1+w_2+w_3=0$ we find that
\[
3\xi_0=w_1+w_2+w_3=0,\quad
3\xi_1=w_1+\omega w_2+\omega^2 w_3,\quad
3\xi_2=w_1+\omega^2 w_2+\omega w_3.
\]
A somewhat tedious computation gives
\[
\xi_1^3=-\frac{\widehat{q}}{2}+i\frac{\sqrt{3}}{18}\sqrt{D}=-\frac{\widehat{q}}{2}+\sqrt{\left(\frac{\widehat{p}}{3}\right)^3-\left(\frac{\widehat{q}}{2}\right)^2},\quad
\xi_2^3=-\frac{\widehat{q}}{2}-i\frac{\sqrt{3}}{18}\sqrt{D}=-\frac{\widehat{q}}{2}-\sqrt{\left(\frac{\widehat{p}}{3}\right)^3-\left(\frac{\widehat{q}}{2}\right)^2},
\]
where we used the fact that $\omega^2+\omega+1=0$. Hence, we obtain 
\[
\xi_1=\sqrt[3]{-\frac{\widehat{q}}{2}+\sqrt{\left(\frac{\widehat{p}}{3}\right)^3-\left(\frac{\widehat{q}}{2}\right)^2}},\quad
\xi_2=\sqrt[3]{-\frac{\widehat{q}}{2}-\sqrt{\left(\frac{\widehat{p}}{3}\right)^3-\left(\frac{\widehat{q}}{2}\right)^2}}
\]
where the signs of the cubic roots must be chosen so that $\xi_1\xi_2=-\widehat{p}/3$. The Lagrange substitutions introduced earlier can be solved uniquely in terms of $\xi_1$ and $\xi_2$ and we obtain
\[
w_j=\sum_{k=1}^2\omega^{-(j-1)k}\xi_k,\quad j=1,2,3.
\] 
Finally, we get the roots of the special cubic as
\[
w_1=\xi_1+\xi_2,\quad
w_2=\omega^2\xi_1+\omega\xi_2,\quad
w_3=\omega\xi_1+\omega^2\xi_2.
\]
\section{The extreme case}
Let $\mu=\mu_{1,+}$ and factorize the polynomial in (\ref{***}) as $(\rho-\rho_{--})(\rho-\rho_{-})(\rho-\rho_c)^2$ which compared with (\ref{***}) gives rise to the following system of equations for the unknowns $\rho_{--}$, $\rho_{-}$, and $\rho_c$
\[
\rho_{--}+\rho_{-}+2\rho_c=0,\quad
\rho_{--}\rho_{-}+2(\rho_{--}+\rho_{-})\rho_c+\rho_c^2=\alpha^2-1,
\]
\[
-2\rho_{--}\rho_{-}\rho_c-(\rho_{--}+\rho_{-})\rho_c^2=\mu_{1,+},\quad
\rho_{--}\rho_{-}\rho_c^2=-\alpha^2.
\]
Note that since $\rho_{--}<0$ and $\alpha^2\geq 0$, the last equation implies that $\rho_{-}>0$. Moreover, the first equation permits to express $\rho_{--}$ in terms of $\rho_{-}$ and $\rho_c$ so that the above system reduces to
\begin{eqnarray}
\rho_{-}(\rho_{-}+2\rho_c)+3\rho_c^2&=&1-\alpha^2,\label{I}\\
2\rho_{-}\rho_c(\rho_{-}+2\rho_c)+2\rho_c^3&=&\mu_{1,+},\label{II}\\
\rho_{-}\rho_c^2(\rho_{-}+2\rho_c)&=&\alpha^2.\label{III}
\end{eqnarray}
We have to deal with an overdetermined system and special care must be exercised in analyzing its solutions. First of all, we note that all three equations above are quadratic polynomials in $\rho_{-}$. Solving (\ref{I})-(\ref{III}) with respect to $\rho_{-}$ and taking into account that $\rho_{-}>0$, we find 
\[
\rho_{-,I}=-\rho_c+\sqrt{1-\alpha^2-2\rho_c^2},\quad
\rho_{-,II}=\frac{-2\rho_c^2+\sqrt{2\mu_{1,+}\rho_c}}{2\rho_c},\quad
\rho_{-,III}=\frac{-\rho_c^2+\sqrt{\rho_c^4+\alpha^2}}{\rho_c}.
\]
At this point we must require that $\alpha^2+2\rho_c^2<1$.  This condition is always satisfied because $\rho_{-,I}>0$ and therefore 
$\sqrt{1-\alpha^2-2\rho_c^2}>\rho_c$. Squaring we get $1-\alpha^2>3\rho_c^2$ and hence $1-\alpha^2-2\rho_c^2>3\rho_c^2-2\rho_c^2=\rho_c^2>0$. Furthermore, the Roman numeral attached to the roots indicates that $\rho_{-,I}$ is a root of (\ref{I}) and so on.  In order to make the system (\ref{I})-(\ref{III}) consistent, we must require that $\rho_{-,I}=\rho_{-,II}=\rho_{-,III}$. This implies that
\begin{equation}\label{IV}
\sqrt{1-\alpha^2-2\rho_c^2}=\sqrt{\rho_c^2+\frac{\alpha^2}{\rho_c^2}}=\sqrt{\frac{\mu_{1,+}}{2\rho_c}}.
\end{equation}
From the first equality in (\ref{IV}) we obtain the biquadratic equation 
\begin{equation}\label{V}
3\rho_c^4-(1-\alpha^2)\rho_c^2+\alpha^2=0.  
\end{equation}
Since $\rho_c>0$, the only two acceptable solutions are
\[
\rho_{c,\pm}=\frac{1}{6}\sqrt{6-6\alpha^2\pm 6\sqrt{(\alpha^2-\alpha_{-}^2)(\alpha^2-\alpha_{+}^2)}}
\]
with $\alpha_{\pm}=2\pm\sqrt{3}$. Equating the first term and last term in (\ref{IV}) and then equating the second and last term in (\ref{IV}) we find that
\[
\mu_{1,+}=2\left(\rho_c^3+\frac{\alpha^2}{\rho_c}\right),\quad \mu_{1,+}=2(1-\alpha^2)\rho_c-4\rho_c^3. 
\]
We see immediately that the above expressions for $\mu_{1,+}$ are equivalent provided that $\rho_c$ is a root of (\ref{V}). Finally, a lengthy but straightforward computation shows that $\mu_{1,+}$ coincides with (\ref{mu1}) only if we choose $\rho_c=\rho_{c,+}$. The solution $\rho_c=\rho_{c,-}$ must be disregarded because would lead to the contradiction $\mu_{1,+}=\mu_{2,+}$. Concerning the case $\rho_{-}=\rho_{+}$ and $\mu=\mu_{2,+}$ we observe that the corresponding set of equations involving $\rho_{+}$ and $\rho_c$ can be obtained directly from (\ref{I})-(\ref{III}) with $\rho_{-}$, $\rho_c$, and $\mu_{1,+}$ replaced by $\rho_c$, $\rho_{+}$, and $\mu_{2,+}$, respectively. The rest of the proof is too similar to the previous case to be presented here.

\end{document}